\begin{document}

\title{Kim-type APN functions are affine equivalent to Gold functions}
\author{Benjamin Chase\\ Petr Lison\v{e}k\thanks{Research of both authors was supported
in part by the Natural Sciences and Engineering Research Council of Canada
(NSERC).}
\\
Department of Mathematics\\
Simon Fraser University\\
Burnaby, BC\\
Canada\ \ \ V5A 1S6\\
\ \\
{\tt plisonek@sfu.ca}}
\date{}

\def\eps{\varepsilon}
\def\F{{\mathbb F}}
\def\Q{{\mathbb Q}}
\def\C{{\mathbb C}}
\def\Z{{\mathbb Z}}
\def\Fq{{\mathbb F_q}}
\def\e{\eta}
\def\L1{L(1,z^*)}
\def\Le{L(\e,z^*)}

\def\P{\phantom{\Big|}}

\newtheorem{theorem}{Theorem}[section]
\newtheorem{lemma}[theorem]{Lemma}
\newtheorem{proposition}[theorem]{Proposition}
\newtheorem{corollary}[theorem]{Corollary}
\newtheorem{conjecture}[theorem]{Conjecture}
\theoremstyle{definition}
\newtheorem{definition}[theorem]{Definition}
\newtheorem{example}[theorem]{Example}
\newtheorem{remark}[theorem]{Remark}
\newtheorem{problem}[theorem]{Problem}

\def\ra{\rightarrow}

\def\gf{\F}
\def\tr{{\rm Tr}}
\def\Tr{{\rm Tr}}

\def\t{\theta}

\def\w{\omega}

\maketitle

\begin{abstract}
	The problem of finding APN permutations of $\F_{2^n}$
	where $n$ is even and $n>6$ has been called the Big APN Problem.
Li, Li, Helleseth and Qu recently characterized APN functions
defined on $\F_{q^2}$
of the form $f(x)=x^{3q}+a_1x^{2q+1}+a_2x^{q+2}+a_3x^3$,
where $q=2^m$ and  $m\ge 4$.
We will call functions of this form
Kim-type functions because they generalize the form
of the Kim function that was used
to construct an APN permutation of $\F_{2^6}$.
We extend the result of Li, Li, Helleseth and Qu by proving that if
a Kim-type function $f$ is APN and $m\ge 4$,
then $f$ is affine equivalent to one of two
Gold functions $G_1(x)=x^3$ or $G_2(x)=x^{2^{m-1}+1}$.
Combined with the recent result
of G\"{o}lo\u{g}lu and Langevin
who proved that, for even $n$, Gold APN functions
are never CCZ equivalent to permutations,
it follows that 
for $m\ge 4$
Kim-type APN functions 
on $\F_{2^{2m}}$
are never CCZ equivalent to permutations.
\end{abstract}

\section{Background}

Let $\F_{2^k}$ denote the finite field with $2^k$ elements
and let $\F_{2^k}^*$ be the set of its nonzero elements.
Let $\Tr_k$ denote the absolute trace from $\F_{2^k}$ to $\F_{2}$.

We say that a function $f:\F_{2^n}\rightarrow \F_{2^n}$
is {\em almost perfect nonlinear (APN)}
if for any $a\in\F_{2^n}^*$ and any $b\in\F_{2^n}$
the equation
\[
f(x+a)+f(x)=b
\]
has at most two solutions $x\in\F_{2^n}$.
We say that functions $f:\F_{2^n}\rightarrow \F_{2^n}$
and $g:\F_{2^n}\rightarrow \F_{2^n}$
are {\em extended affine (EA) equivalent}
if there exist affine permutations $L_1,L_2$
of $\F_{2^n}$ and an affine mapping
$L_3:\F_{2^n}\rightarrow \F_{2^n}$
such that $g(x)=L_1(f(L_2(x)))+L_3(x)$.
If $L_3(x)=0$ we will say that
$f$ and $g$ are {\em affine equivalent;}
this is the kind of equivalence that we will be using
in the present paper.
EA equivalence is a special case
of {\em Carlet-Charpin-Zinoviev (CCZ) equivalence} \cite{CCZ}.
CCZ equivalence preserves 
the property whether the function is APN or not.

Many infinite families of APN functions are known.
A recent account of the known families
is given in \cite{BCV}.
In the present paper we will encounter
the {\em Gold} APN functions
$G(x)=x^{2^k+1}$ where $\gcd(k,n)=1$.

APN functions are of great interest in cryptography as components of S-boxes
in block ciphers,
where they provide optimal protection against differential
cryptanalysis. They also have connections to combinatorial
objects such as finite geometries and combinatorial designs
\cite{Lietal}.

In some designs of block ciphers, such as
substitution-permutation networks, it is desirable that 
the APN function is also a {\em permutation} of $\F_{2^n}$.
While many APN permutations are known
for odd $n$, the situation is very different 
when $n$ is even. It is known that no 
APN permutations of $\F_{2^n}$ exist when
$n=2,4$.
An APN permutation of $\F_{2^6}$
was announced
at the Fq9 conference in 2009 \cite{BDMW}.
Its construction is based
on the APN function 
\[
\kappa(x)=x^3+x^{10}+ux^{24}
\]
where $u$ is a suitable primitive element of $\F_{2^6}$.
The function $\kappa$ is known as {\em Kim function.}
APN permutations of $\F_{2^6}$ are then
constructed by applying a suitable
CCZ equivalence transformation
to the Kim function  \cite{BDMW};
all examples constructed
in \cite{BDMW} are mutually CCZ equivalent.

The question of existence of APN permutations of $\F_{2^n}$ 
for even $n>6$ was posed as the {\em Big APN Problem} in \cite{BDMW},
and it has attracted  lot of interest since then.

By generalizing the form of the Kim function, Carlet in \cite{carlet2014open} posed the following open problem:

\begin{problem}\label{Prob1}
	\cite[Section 3.7]{carlet2014open}
	Find more APN functions or, better, infinite classes of APN functions of the form $X^3+aX^{2+q}+bX^{2q+1}+cX^{3q}$ where $q=2^{n/2}$ with $n$ even, or more generally of the form  $X^{2^k+1}+aX^{2^k+q}+bX^{2^kq+1}+cX^{2^kq+q},$ where $\gcd(k,n)=1$. 
\end{problem} 

The motivation for solving this problem is given by possibly applying
a suitable CCZ transformation to the resulting APN functions
to obtain APN permutations, in analogy to the approach of \cite{BDMW}.
There have been several lines of attack to approach
Problem~\ref{Prob1} and similar problems; we refer
to the Introduction section of \cite{LLHQ} for a detailed account of them.

Most recently, Li, Li, Helleseth and Qu \cite{LLHQ} solved the first part
of Problem~\ref{Prob1} by proving Theorem~\ref{thm-LLHQ} stated below,
in which they completely characterize APN functions
on $\F_{2^n}$
of the form $f(x)=x^{3q}+a_1x^{2q+1}+a_2x^{q+2}+a_3x^3$,
where $q=2^m$ and $m=n/2$, $m\ge 4$.
We will call functions of this form
{\em Kim-type functions} in the present paper.
We extend the result of \cite{LLHQ} by proving that if
a Kim-type function $f$ is APN and $m\ge 4$,
then $f$ is affine equivalent to one of two
Gold functions $G_1(x)=x^3$ or $G_2(x)=x^{2^{m-1}+1}$.
Combined with the recent result
of G\"{o}lo\u{g}lu and Langevin \cite{GL}
which states that, for even $n$, Gold APN functions
are never CCZ equivalent to permutations,
it follows that 
for $m\ge 4$
Kim-type functions are never CCZ equivalent to permutations.

\subsection{Preliminaries}

Throughout the paper we assume that $m\ge 4$ is an integer
and $q=2^m$. For $z\in\F_{q^2}$ we denote $\bar z=z^q$. 
The set
\[
U=\{z\in\F_{q^2}\;|\;z^{q+1}=1\}
\]
will play an important role. Since $z^{q+1}=z\bar{z}$
is the norm of $z\in\F_{q^2}$ with respect to $\F_q$, the set $U$ is sometimes called
the ``unit circle'' in $\F_{q^2}$. A well known fact  is that every $z\in\F_{q^2}^*$
can be written {\em uniquely} as $z=xy$ where $x\in\F_{q}$ and $y\in U$.
Of course, for $z\in U$ we have $z^q=\frac{1}{z}$.
This computational rule will come handy in many proofs in this paper.
Some of the calculations in this paper, in particular
factorization of polynomials and the computation
of a resultant of polynomials, were done with the support
of the computer algebra system Magma \cite{mag}.

We will need a condition for solubility of certain quadratic equations
by elements of $U$ distinct from~$1$.

\begin{lemma}
	\label{lem-quad-U}
	Assume that $A,B,C\in\F_q$ and the equation $Az^2+Bz+C=0$ has a solution
	$z\in U\setminus\{1\}$. Then $A=C$.
\end{lemma}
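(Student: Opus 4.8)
The plan is to exploit the fact that the coefficients $A,B,C$ lie in $\F_q$ together with the defining relation of $U$, namely $\bar z = z^q = 1/z$, recorded in the Preliminaries. First I would apply the Frobenius map $x\mapsto x^q$ to both sides of the equation $Az^2+Bz+C=0$. Since $A,B,C\in\F_q$ are fixed by this map, and since $z^q=1/z$ for $z\in U$, the conjugated equation becomes $A/z^2+B/z+C=0$. Clearing denominators by multiplying through by $z^2$ then produces a companion equation $Cz^2+Bz+A=0$, which is exactly the original quadratic with the roles of $A$ and $C$ interchanged.

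With these two equations in hand, I would subtract them to eliminate the linear term $Bz$, obtaining $(A-C)z^2+(C-A)=0$, that is, $(A-C)(z^2-1)=0$. In characteristic $2$ we have $z^2-1=(z-1)^2$, so this factors as $(A-C)(z-1)^2=0$. Because the hypothesis guarantees $z\neq 1$, the factor $(z-1)^2$ is nonzero, and cancelling it forces $A=C$, as claimed.

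There is no serious obstacle here; the whole argument rests on the single observation that applying Frobenius turns the quadratic into its ``reverse'' (swapping $A$ and $C$), after which the difference of the two quadratics is controlled entirely by the factor $z^2-1$. The only point that genuinely requires the hypothesis is the cancellation of $(z-1)^2$: this is where $z\neq 1$ is indispensable, since for $z=1$ the original equation merely asserts $A+B+C=0$ and does not force $A=C$. This is precisely why the statement excludes $z=1$ from $U$.
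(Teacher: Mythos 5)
Your proof is correct and follows essentially the same route as the paper: raise the equation to the $q$-th power (using $z^q=1/z$ and $A,B,C\in\F_q$) to obtain the reversed quadratic $Cz^2+Bz+A=0$, add/subtract, and cancel the nonzero factor $(z+1)^2$. The only cosmetic difference is that you write out the intermediate steps (clearing denominators, factoring $z^2-1=(z-1)^2$) that the paper compresses into ``simplifying.''
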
 
\begin{proof}
	Suppose $Az^2+Bz+C=0$ and $z\in U\setminus\{1\}$. After raising the
	equation to the $q$-th power and simplifying we get
	$A+Bz+Cz^2=0$. Adding the equations yields $(A+C)(z^2+1)=0$. Since 
	$z\neq 1$, the conclusion follows.
\end{proof}

\subsection{Characterizations of Kim-type APN functions}
\label{sec-Kim-APN}

In Section~\ref{sec-Kim-APN} we summarize the most important
results of the recent preprints \cite{LLHQ} and \cite{GKL}.

We start with an observation which is not stated
as a numbered item in \cite{LLHQ} but we want to give it
a label for our future reference, and we also state it in
a slightly different form.

\begin{lemma}
	\label{lem-a1-small-field}
	Any function $f:\F_{q^2}\ra\F_{q^2}$  
	given by $f(x)=x^{3q}+a_1x^{2q+1}+a_3x^3$
	where $a_1,a_3\in\F_{q^2}$
	is affine equivalent to 
	a function $f'(x)=x^{3q}+a_1'x^{2q+1}+a_3'x^3$
	where $a_1'\in\F_{q}$ and $a_3'\in\F_{q^2}$.
\end{lemma}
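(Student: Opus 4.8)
The plan is to realize the desired affine equivalence using only a linear substitution in the input together with a scalar multiplication of the output; that is, to take $L_2(x)=cx$ and $L_1(y)=dy$ for suitable nonzero constants $c,d\in\F_{q^2}$. Both maps are linear permutations of $\F_{q^2}$, hence affine permutations, so $f'(x)=L_1(f(L_2(x)))$ is affine equivalent to $f$. Substituting and collecting terms gives
\[
L_1(f(cx))=d\,c^{3q}x^{3q}+d\,a_1c^{2q+1}x^{2q+1}+d\,a_3c^3x^3 .
\]
To keep the leading coefficient equal to $1$ I would set $d=c^{-3q}$, which forces
\[
a_1'=a_1c^{2q+1-3q}=a_1c^{1-q},\qquad a_3'=a_3c^{3-3q}=a_3(c^{1-q})^3 .
\]
Since the statement imposes no restriction on $a_3'$, the entire problem reduces to choosing $c\in\F_{q^2}^*$ so that $a_1'=a_1c^{1-q}$ lies in $\F_q$.

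The key observation is that $c^{1-q}=(c^{q-1})^{-1}$ always lies in the unit circle $U$, because $(c^{q-1})^{q+1}=c^{q^2-1}=1$; moreover the map $c\mapsto c^{1-q}$ from $\F_{q^2}^*$ to $U$ is a surjective homomorphism, its kernel being exactly $\F_q^*$, so that the image has size $|\F_{q^2}^*|/|\F_q^*|=q+1=|U|$. Thus I may prescribe $c^{1-q}$ to be any element of $U$ that I wish. If $a_1=0$ there is nothing to do, so assume $a_1\neq 0$ and write $a_1=x_0y_0$ in the unique form with $x_0\in\F_q^*$ and $y_0\in U$ guaranteed by the preliminaries. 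Choosing $c$ so that $c^{1-q}=y_0^{-1}=\bar{y_0}$ then yields $a_1'=x_0\in\F_q^*$, as required.

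The argument is essentially complete once this choice is made; the only point that needs a little care, and which I regard as the crux, is the claim that $c^{1-q}$ sweeps out all of $U$, since this is precisely what allows the $U$-component $y_0$ of $a_1$ to be cancelled. This can be verified either by the kernel/image count above, or, in a self-contained way, by translating the membership condition $a_1'\in\F_q$ into the equation $(c^{q-1})^2=a_1/\bar{a_1}$ in $U$ and noting that squaring is a bijection of $U$ because $|U|=q+1$ is odd; either route confirms that the required $c$ exists.
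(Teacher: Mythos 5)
Your proof is correct and follows essentially the same route as the one the paper relies on: the paper simply defers to the substitution argument at the bottom of page~2 of \cite{LLHQ}, which is exactly your construction --- compose with $L_2(x)=cx$ and normalize the leading coefficient, so that the middle coefficient becomes $a_1c^{1-q}$, and then choose $c$ so that $c^{1-q}\in U$ cancels the unit-circle component of $a_1$ in the decomposition $a_1=x_0y_0$. Your justification that $c\mapsto c^{1-q}$ maps $\F_{q^2}^*$ onto all of $U$ (kernel $\F_q^*$, image of size $q+1$) is the standard norm-one counting and is sound, so your write-up is a correct, self-contained version of the cited proof.
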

\begin{proof}
	This is proved at the bottom of page 2 in \cite{LLHQ}.
	The proof given there applies to the general case when
	$f(x)$ also contains the term $a_2x^{q+2}$. Denote the function
	obtained in the proof by $f'(x)=x^{3q}+a_1'x^{2q+1}+a_2'x^{q+2}+a_3'x^3$.
	It follows from the proof that if $a_2=0$, then $a_2'=0$.
\end{proof}

The following constants are introduced in \cite{LLHQ}:
\begin{equation}
\label{theta}	
\left\{
\begin{array}{lr}
\theta_1 = 1+a_1^2+a_2\bar{a}_2+a_3\bar{a}_3  \\ 
\theta_2 = a_1+\bar{a}_2a_3 \\
\theta_3 = \bar{a}_2+a_1\bar{a}_3 \\
\theta_4 = a_1^2+a_2\bar{a}_2.
\end{array}
\right.
\end{equation}


The following theorem is the main result of \cite{LLHQ}.

\begin{theorem}\cite[Theorem 1.2]{LLHQ}
	\label{thm-LLHQ}
	Let $n=2m$ with $m\ge4$ and $f(x) = \bar{x}^3 + a_1\bar{x}^2x+a_2\bar{x}x^2+a_3x^3$, where $a_1\in\gf_{2^m}, a_2,a_3\in\gf_{2^n}$. 
	Let $\theta_i$'s be defined as in \eqref{theta} and 
	define
	\begin{equation}\label{Eq-4}
	\Gamma_1 = \left\{ (a_1,a_2,a_3) ~|~ \theta_1\neq0, ~ \tr_{m}\left(\frac{\theta_2\bar{\theta}_2}{\theta_1^2}\right) =0,~ \theta_1^2\theta_4 + \theta_1\theta_2\bar{\theta}_2 + \theta_2^2\theta_3 + \bar{\theta}_2^2\bar{\theta}_3 =0 \right\}
	\end{equation}
	and 
	\begin{equation}\label{Eq-5}
	\Gamma_2 = \left\{ (a_1,a_2,a_3) ~|~ \theta_1\neq0, ~\tr_{m}\left(\frac{\theta_2\bar{\theta}_2}{\theta_1^2}\right) =0, ~\theta_1^2\theta_3 + \theta_1\bar{\theta}_2^2 + \theta_2^2\theta_3 + \bar{\theta}_2^2\bar{\theta}_3 =0 \right\}.
	\end{equation}
	Then $f$ is APN over $\gf_{2^n}$ if and only if 
	\begin{enumerate} 
		\item[(1)] $m$ is even, $(a_1,a_2,a_3)\in\Gamma_1\cup\Gamma_2$; or
		\item[(2)] $m$ is odd, $(a_1,a_2,a_3)\in\Gamma_1$.
	\end{enumerate}
\end{theorem}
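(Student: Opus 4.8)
The plan is to exploit the fact that $f$ is a quadratic (Dembowski--Ostrom) function: each of the four exponents $3q,\,2q+1,\,q+2,\,3$ has binary weight $2$, so the associated form $B(x,y):=f(x+y)+f(x)+f(y)$ is $\F_2$-bilinear and symmetric. For such $f$ the derivative $x\mapsto f(x+a)+f(x)$ is affine, and $f$ is APN if and only if for every $a\in\F_{q^2}^*$ the solution set of $B(a,x)=0$ is exactly the $\F_2$-subspace $\{0,a\}$. First I would expand, using $\overline{x+y}=\bar x+\bar y$ and $a_1\in\F_q$, to obtain
\[
B(x,y)=\bar x^2\bar y+\bar x\bar y^2+a_1(\bar x^2y+\bar y^2x)+a_2(\bar x y^2+\bar y x^2)+a_3(x^2y+xy^2).
\]

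Next I would exploit the scaling identity $f(sx)=s^3f(x)$ for $s\in\F_q$, which makes the APN condition invariant under $(a,x)\mapsto(sa,sx)$; it therefore suffices to substitute $x=za$ and divide the equation $B(a,za)=0$ by $a^3$. Setting $w=\bar a/a$, which lies in $U$ because $w^{q+1}=a^{q^2-1}=1$, the substitution collapses everything to
\[
w^3(\bar z+\bar z^2)+a_1w^2(z+\bar z^2)+a_2w(z^2+\bar z)+a_3(z+z^2)=0,
\]
an equation depending on $a$ only through $w$ and always having the spurious roots $z\in\{0,1\}$ (corresponding to $x\in\{0,a\}$). As $a$ ranges over $\F_{q^2}^*$ the element $w$ ranges over all of $U$, so $f$ is APN if and only if for every $w\in U$ the displayed equation has no solution $z\in\F_{q^2}\setminus\{0,1\}$.

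The decisive step is to eliminate $\bar z$. Raising the equation to the $q$-th power and using $\bar w=1/w$, $z^{q^2}=z$ and $\bar a_1=a_1$ produces a companion equation; regarding the original and the companion as two quadratics in the single unknown $\bar z$ with coefficients polynomial in $z$ and $w$, I would take their resultant with respect to $\bar z$ to obtain a relation $P(z,w)=0$ purely in $z,w$. This is precisely the resultant computation the paper carries out with Magma. After dividing out the factors vanishing at $z\in\{0,1\}$, I expect the constants $\theta_1,\dots,\theta_4$ of \eqref{theta} to surface as the coefficients of the reduced factor, reducing the problem to deciding for which $(a_1,a_2,a_3)$ this factor has a root $z$ with the associated $w$ in $U$.

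The hard part will be this final solubility analysis, where the unit-circle calculus is essential. A bad solution forces, after use of $z^q=1/z$ on $U$, a quadratic with coefficients in $\F_q$ to have a root in $U\setminus\{1\}$; Lemma~\ref{lem-quad-U} then equates its leading and constant coefficients, and I expect this to produce exactly the cubic relations cutting out $\Gamma_1$ and $\Gamma_2$ in \eqref{Eq-4} and \eqref{Eq-5}. The trace condition $\tr_m(\theta_2\bar\theta_2/\theta_1^2)=0$ should enter as the Artin--Schreier solubility criterion over $\F_{2^m}$ for the auxiliary quadratic $t^2+t+\theta_2\bar\theta_2/\theta_1^2=0$ arising in the analysis, while the requirement $\theta_1\ne0$ isolates a degenerate case that must be treated separately. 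Finally, the split between $m$ even and $m$ odd should trace back to whether $3\mid q+1$, that is, whether $U$ contains nontrivial cube roots of unity (which happens exactly when $m$ is odd); their presence supplies additional bad solutions, and this is what I expect to force the family $\Gamma_2$ to be admissible only in the even case. Keeping careful track of which algebraic roots actually lie on the unit circle, throughout a somewhat intricate case analysis, is the principal obstacle.
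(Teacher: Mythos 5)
First, a point of reference: the paper does not prove this theorem at all. It is Theorem~1.2 of \cite{LLHQ}, imported as a black box (the citation is part of the theorem header), so there is no internal proof to compare your attempt against; I can only judge the proposal on its own terms and against the strategy of the cited source.

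Your setup is correct and is the standard opening for quadrinomials of this shape: $f$ is quadratic, so APN-ness is equivalent to the kernel of $x\mapsto B(a,x)$ being exactly $\{0,a\}$ for every $a\neq 0$; your expansion of $B$, the substitution $x=za$, and the observation that after division by $a^3$ the equation depends on $a$ only through $w=\bar a/a$, which ranges over all of $U$, are all accurate. But everything that actually constitutes the theorem is deferred. Concretely: (1) you never carry out the elimination of $\bar z$, so the constants $\theta_1,\dots,\theta_4$ and the two cubic relations cutting out $\Gamma_1$ and $\Gamma_2$ in \eqref{Eq-4} and \eqref{Eq-5} are not derived but only ``expected to surface''; the entire content of the statement lives there. (2) The resultant in $\bar z$ of the equation and its $q$-th power companion yields only a \emph{necessary} condition $P(z,w)=0$ on a genuine solution. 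That suffices for the direction ``$(a_1,a_2,a_3)$ in the stated sets $\Rightarrow$ APN,'' but for the converse you must exhibit a bad pair with $\bar z=z^q$ and $w\in U$ whenever the conditions fail, and a necessary condition cannot do that. Relatedly, Lemma~\ref{lem-quad-U} gives only the necessary condition $A=C$ for $Az^2+Bz+C=0$ to have a root in $U\setminus\{1\}$; to prove an ``if and only if'' you also need the matching sufficiency criterion (a trace condition on the coefficients), which your plan never supplies. (3) The parity dichotomy is attributed to nontrivial cube roots of unity in $U$ (correctly located at $m$ odd), but no argument connects their presence to the admissibility of $\Gamma_2$; as written this is a conjecture, not a step. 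In short, the skeleton is sound and broadly consistent with how \cite{LLHQ} proceed, but the proposal establishes neither implication and cannot be credited as a proof of the theorem.
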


Prior to the appearance of the manuscript \cite{LLHQ},
a characterization of the Kim-type APN functions
for the special case when
the coefficients $a_i$ lie in the subfield $\F_q$
was given in \cite{K},
and then the following
result on equivalence with Gold functions was proved
in \cite{GKL}.

\begin{theorem} \cite{GKL}
	\label{thm-small-field}
	Suppose that $m\ge 4$ is an integer and let $q=2^m$. Let $f:\F_{q^2}\ra\F_{q^2}$ be 
	given by $f(x)=x^{3q}+a_1x^{2q+1}+a_2x^{q+2}+a_3x^3$
	where $a_1,a_2,a_3\in\F_{q}$. If $f$ is APN, then
	$f$ is affine equivalent to $G_1(x)=x^3$ or
	$f$~is affine equivalent to $G_2(x)=x^{2^{m-1}+1}$.
\end{theorem}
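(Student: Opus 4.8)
The plan is to exploit that $f$ is a Dembowski--Ostrom (quadratic) polynomial: each exponent $3q,2q{+}1,q{+}2,3$ has $2$-adic weight $2$, so for every $a\in\F_{q^2}^*$ the map $x\mapsto f(x+a)+f(x)+f(a)$ is $\F_2$-linear, and $f$ is APN exactly when this map has kernel $\{0,a\}$ for all $a\neq0$. A direct expansion gives
\[ f(x+a)+f(x)+f(a)=(\bar a+a_1a)\bar x^2+(a_2\bar a+a_3a)x^2+(\bar a^2+a_2a^2)\bar x+(a_1\bar a^2+a_3a^2)x, \]
and since $a_i\in\F_q$ the function is $\F_q$-homogeneous of degree $3$; writing $a=su$ with $s\in\F_q^*$, $u\in U$ and substituting $x=az$, the solution count depends only on $w:=a/\bar a\in U$, so APN reduces to a single $\F_2$-linear condition parametrized by $w\in U$. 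Rather than analyze this directly, I would invoke Theorem~\ref{thm-LLHQ}: as $\bar a_i=a_i$, the quantities in \eqref{theta} lie in $\F_q$ and the defining relations collapse. A short calculation gives $\theta_2^2+\theta_1\theta_4=(a_1^2+a_1a_3+a_2+a_2^2)^2$ and $\tr_m(\theta_2\bar\theta_2/\theta_1^2)=\tr_m(\theta_2/\theta_1)$, so APN becomes: $\theta_1\neq0$, $\tr_m(\theta_2/\theta_1)=0$, and either $a_1^2+a_1a_3+a_2+a_2^2=0$ (case $\Gamma_1$) or $\theta_2^2=\theta_1\theta_3$ (case $\Gamma_2$).

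The constructive heart is the cubic form $F(X,Y)=Y^3+a_1Y^2X+a_2YX^2+a_3X^3$ obtained with $X=x$, $Y=\bar x$, so that $f(x)=F(x,\bar x)=\prod_{i=1}^3(\bar x+r_ix)$, where the $r_i$ are the roots of $g(Y)=Y^3+a_1Y^2+a_2Y+a_3$. Each factor $\ell_i(x)=\bar x+r_ix$ is $\F_2$-linear and is a permutation precisely when $r_i\notin U$; since $x^{q-1}$ ranges over $U$, the function $f$ vanishes at a nonzero point iff some $r_i\in U$, so if $f$ is to be linearly equivalent to a Gold function, which has only the trivial zero, then no $r_i$ may lie in $U$. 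I would first dispose of the degenerate cases, which already display the dichotomy. If $g$ has a triple root, then $a_2=a_1^2,\ a_3=a_1^3$ and $f=(\bar x+a_1x)^3=G_1\circ\ell$ with $\ell=\ell_1$ a permutation (here $\theta_1\neq0$ forces $a_1\neq1$, i.e.\ $r_1=a_1\notin U$), so $f$ is affine equivalent to $G_1$. If $g$ has a double root in the $\Gamma_2$ configuration, a short computation gives $(a_1,a_2,a_3)=(a_1,a_1^{-2},a_1^{-1})$ and $f=(\bar x+a_1^{-1}x)^2(\bar x+a_1x)$; substituting the permutation $u=\bar x+a_1^{-1}x$ collapses $f$ to $a_1u^{q+2}=a_1(u^{2^{m-1}+1})^2$, whence $f$ is affine equivalent to $G_2$. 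This matches the arithmetic, since $G_2=x^{2^{m-1}+1}$ is Gold only when $\gcd(m-1,2m)=1$, i.e.\ $m$ even, consistent with $\Gamma_2$ occurring only for even $m$ in Theorem~\ref{thm-LLHQ}.

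For the general case I would pass to a trinomial normal form. When $g$ has a root $r_1\in\F_q\setminus U$, the permutation $u=\ell_1(x)=\bar x+r_1x$ eliminates the $\bar u^3$ term: substitution yields $f\sim u^3+Bu^{q+2}+Cu^{2q+1}$ with $B,C\in\F_q$, equivalently (after the $\F_2$-linear Frobenius $y\mapsto\bar y$) a Kim-type function with $a_3=0$, for which the surviving $\Gamma_1$ relation is the Artin--Schreier identity $C^2+C=B^2$ (with its analogue in the $\Gamma_2$ branch). Securing a usable root $r_1\in\F_q\setminus U$ is exactly where the trace condition $\tr_m(\theta_2/\theta_1)=0$ and Lemma~\ref{lem-quad-U} enter: the trace condition is the Artin--Schreier solvability statement that guarantees a root of $g$ in $\F_q$, while Lemma~\ref{lem-quad-U} shows that the relevant quadratic factors cannot pick up a root on $U\setminus\{1\}$, so the linear forms used are genuine permutations; Lemma~\ref{lem-a1-small-field} can be used in the same spirit to normalize the remaining coefficient.

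I expect the main obstacle to be finishing the generic case, where $g$ has three distinct roots. The factorization type of $F$ is preserved by the $\F_q$-linear (``unitary'') substitutions $x\mapsto\alpha x+\beta\bar x$, which act on the root directions by a M\"obius transformation and hence can never merge distinct roots into the triple root of $G_1$ or the double root of $G_2$; consequently the reduction to a Gold monomial \emph{cannot} be carried out inside the $\F_q$-linear group and must use a genuinely $\F_2$-linear change of variables. My plan to overcome this is to push the trinomial normal form together with its coefficient relation ($C^2+C=B^2$ for $\Gamma_1$, and its analogue for $\Gamma_2$) until $f$ falls into an already classified Gold-equivalent family, or to produce the linearizing map by solving the resulting linear system directly; alternatively, one may compare the ortho-derivatives of $f$ and of $G_1,G_2$ and invoke that CCZ-equivalent quadratic APN functions are EA-equivalent. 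Confirming that the branches $\Gamma_1,\Gamma_2$ really land on $G_1,G_2$ respectively (which for $m\ge4$ are mutually inequivalent), and controlling the characteristic-two square-root subtleties in the repeated-factor analysis, are the remaining delicate points.
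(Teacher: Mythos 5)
Your proposal correctly reduces the APN condition via Theorem~\ref{thm-LLHQ} to $\theta_1\neq 0$, the trace condition, and the vanishing of $S$ or $T$, and your treatment of the degenerate factorization types is sound and illuminating: the triple-root case gives $f=(\bar x+a_1x)^3$, hence $G_1$, and the double-root $\Gamma_2$ configuration $(a_1,a_1^{-2},a_1^{-1})$ collapses to $a_1u^{q+2}=a_1\bigl(u^{2^{m-1}+1}\bigr)^2$ under $u=\bar x+a_1^{-1}x$, hence $G_2$. But the argument stops exactly where the theorem lives. The generic case, in which $g(Y)=Y^3+a_1Y^2+a_2Y+a_3$ has three distinct roots, accounts for essentially all APN triples in $\Gamma_1\cup\Gamma_2$ (for instance every $f_1(x)=x^{3q}+(b^2+b)x^{2q+1}+(b+1)^2x^{q+2}$ with $b\in\F_q\setminus\F_2$), and for it you offer only a list of candidate strategies. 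That is a genuine gap, not a verification left to the reader; you have also not addressed the double-root case on the $\Gamma_1$ branch.

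Moreover, the obstruction you identify in the generic case is illusory, and seeing why it fails points to the missing idea. A substitution $x\mapsto\alpha x+\beta\bar x$ does act on the root directions of the binary cubic form by a M\"obius transformation and so preserves the factorization type; but affine equivalence also permits an \emph{outer} map $L(z)=\gamma z+\delta\bar z$, and since $z\mapsto\bar z$ replaces the form by its conjugate, the result is a sum of two cubic forms whose factorization type can change. Concretely, with $L_1(x)=x^q+rx$ and $L_2(z)=r^3z^q+z$ one gets
\[
L_2(G_1(L_1(x)))=r^3(x+rx^q)^3+(x^q+rx)^3=(r^6+1)x^{3q}+(r^5+r)x^{2q+1}+(r^4+r^2)x^{q+2},
\]
a form with three distinct root directions for generic $r$; after normalizing the leading coefficient this is $x^{3q}+(b'^2+b')x^{2q+1}+(b'+1)^2x^{q+2}$ with $b'=(r+1)^3/(r^3+1)$, and one checks that $\Tr_m(1/b')=\Tr_m(1)$ always holds and that $b'$ sweeps exactly the admissible $b$ as $r$ runs over $\F_q\setminus\F_4$ --- which is precisely the trace condition of Theorem~\ref{thm-LLHQ}. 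A parallel computation with $L(G_2(L(x)^{2q}))$, $L(x)=x^q+rx$, settles the branch $T=0$, $c_2=1$. Without this explicit two-sided construction (or an equivalent replacement for it), your argument does not prove the theorem.
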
 

In order to make the present paper self-contained,
in the remainder of Section~\ref{sec-Kim-APN}
we include the proof of Theorem \ref{thm-small-field} 
using Theorem~\ref{thm-LLHQ} and some ideas of \cite{GKL}.


Assume that $a_1,a_2,a_3\in\F_{q}$. Then $\t_1=(1+a_1+a_2+a_3)^2$
and equations (\ref{Eq-4}) and (\ref{Eq-5}) factorize as 
$\t_1 S^2=0$ and $\t_1 S T =0$ respectively,
where
\[
S = a_1^2+a_1a_3+a_2^2+a_2
\]
and
\[
T = a_1a_3+a_2+a_3^2+1.
\]
If $f$ is APN,
then $\t_1\neq 0$, thus we must have
$S=0$ for equation (\ref{Eq-4}) to hold,
and we must have
$ST=0$ for equation (\ref{Eq-5}) to hold.


Recall that a linear map 
from $\F_{q^2}$ to $\F_{q^2}$
is a bijection if and only if it has no nonzero roots in $\mathbb{F}_{q^2}$. In the following proof of Theorem~\ref{thm-small-field} we will often use linear maps of the form $L(x) = x^q + tx$ which are bijections if and only if $t\notin U$. 
For the case of $t\in\mathbb{F}_q$ we only need to check that $t\ne 1$ for $L$ to be a bijection. 

First we will show that if the function $f$ given in Theorem~\ref{thm-small-field} is APN, then $f$ is affine equivalent to 
a function that has
one of the following forms: 
\begin{itemize}
	\item[(i)] $f_1(x) = x^{3q} + c_1x^{2q+1} + c_2x^{q+2}$,
	\item[(ii)] $f_2(x) = x^{3q} + c_2x^{q+2} + x^3$,
\end{itemize} with $c_1, c_2\in \mathbb{F}_q$.

If $a_3 = 0$ then we are done. Assume $a_3 = 1$. If $a_1 = 0$ then we are done. If $a_1 = a_2$ then $f(x)\in\mathbb{F}_q$ for all $x\in\mathbb{F}_{q^2}$, which contradicts $f$ being APN. If $a_1\notin \{ 0,a_2 \}$ then let $r=a_2/a_1$,
and let 
\[
L(u)=\frac{u^q + ru}{r+1}.
\]
The function $f_2(x)=L(f(x))$ is affine equivalent
to $f$, and $f_2$ is of the form (ii) given above.

Now assume $a_3\notin\{0,1\}$. 
Let 
\[
L(u)=\frac{a_3u^q + u}{a_3^2+1}.
\]
The function $f_1(x)=L(f(x))$ is affine equivalent
to $f$, and $f_1$ is of the form (i) given above.
Therefore, if $f$ is APN, it must be affine equivalent to a function of the form (i) or (ii).

Now we will show that APN functions of the form $f_1$ or $f_2$ are affine equivalent to $G_1$ or $G_2$. First consider $f_1$. Suppose 
$$S = c_1^2 + c_2^2 + c_2 = 0.$$ 
If $c_2 = 0$ then $c_1 = 0$ and $f_1(x)^q = G_1(x)$. If $c_2 = 1$ then $S=c_1^2=0$, but by Theorem~\ref{thm-LLHQ}, we require that $\theta_1 = c_1^2\ne 0$. 
So assume $c_2\notin\{0,1\}$. Let $b\in\mathbb{F}_q\setminus\mathbb{F}_2$ be such that $c_2 = (b+1)^2$. Then it follows from $S=0$ that $c_1 = b^2 + b$. So $f_1$ is of the form
\begin{equation}
\label{eq-f1}
f_1(x) = x^{3q} + (b^2+b)x^{2q+1} + (b+1)^2x^{q+2}. 
\end{equation}
Let $r\in\mathbb{F}_q\setminus\mathbb{F}_4$ so that $r^3\ne 1$ and the linear maps $L_1(x) = x^q + rx$ and $L_2(x) = r^3x^q + x$ are bijections. We have $$L_2(G_1(L_1(x))) = (r^6+1)x^{3q} + (r^5+r)x^{2q+1} + (r^4+r^2)x^{q+2}.$$ Dividing by $(r^6+1)$, we see that $G_1$ is affine equivalent to $$F_1(x) = x^{3q} +  \frac{r^5+r}{r^6+1} x^{2q+1} +  \frac{r^4+r^2}{r^6+1} x^{q+2}.$$ Letting $$b' = \frac{(r+1)^3}{r^3+1},$$ we have $$F_1(x) = x^{3q} + (b'^2 + b')x^{2q+1} + (b'+1)^2x^{q+2}.$$ From the trace condition of Theorem~\ref{thm-LLHQ}
applied to function $f_1$ in equation (\ref{eq-f1}) we have $$\Tr_m\left(\frac{\theta_2^{q+1}}{\theta_1^2}\right) = \Tr_m\left( \frac{\theta_2}{\theta_1}\right) =
 \Tr_m\left( \frac{b^2+b}{b^2} \right) = 0,$$ 
 equivalently
  $$\Tr_m\left( \frac{1}{b} \right) = \Tr_m(1).$$ To show that $f_1$ is affine equivalent to $G_1$, we must show that $b'$ takes all values from $\mathbb{F}_q\setminus\mathbb{F}_2$ such that $\Tr_m(b'^{-1}) = \Tr_m(1)$ as $r$ runs through $\mathbb{F}_q\setminus\mathbb{F}_4$. Let $r' = r+1$. Then as $r$ runs through $\mathbb{F}_q\setminus\mathbb{F}_4$, so does $r'$, and \begin{align*}
\Tr_m(b'^{-1}) = \Tr_m\left( \frac{r^3+1}{(r+1)^3} \right) 
&= \Tr_m\left(\frac{r'^3 + r'^2 + r'}{r'^3} \right)\\
&= \Tr_m(1) + \Tr_m\left(\frac{1}{r'}\right) + \Tr_m\left(\frac{1}{r'^2}\right) = \Tr_m(1).\end{align*}
Therefore, if $S=0$,
then $f_1$ is affine equivalent to $G_1$. 

Now suppose $S\ne 0$, that is, $m$ is even and $T=c_2+1=0$. Then $c_2 = 1$ and $$f_1(x) = x^{3q} + c_1x^{2q+1} + x^{q+2}.$$ By Theorem~\ref{thm-LLHQ}, we require $\theta_1 = c_1^2\ne 0$, that is, $c_1\ne 0$. We also require the trace condition $$\Tr_m\left( \frac{\theta_2^{q+1}}{\theta_2^2} \right) = \Tr_m\left( \frac{1}{c_1} \right) = 0.$$ Let $r\in \mathbb{F}_q\setminus \mathbb{F}_2$ and let $L(x) = x^q + rx$. We have 
$$L(G_2(L(x)^{2q})) = (r^3 + r)x^{3q} + (r + 1)^4x^{2q + 1} + (r^3 + r)x^{q+2}.$$
After dividing by $r^3 + r$, we see that $G_2$ is affine equivalent to $$F_1(x) = x^{3q} + dx^{2q+1} + x^{q+2}$$ where $$d = \frac{(r+1)^2}{r}.$$ Let $r' = r+1$ so that $r'$ is also in $\mathbb{F}_q\setminus\mathbb{F}_2$. We have $$\Tr_m\left(\frac{1}{d}\right) = \Tr_m\left(\frac{r}{(r+1)^2}\right) = \Tr_m\left(\frac{r'+1}{r'^2}\right) = \Tr_m\left(\frac{1}{r'}\right) + \Tr_m\left(\frac{1}{r'^2}\right) = 0$$ 
which means that $d$ takes all values in $\mathbb{F}_q^*$ such that $\Tr_m(d^{-1})=0$. Thus $f_1$ is  affine equivalent to $G_2$
when $S\neq 0$.

Considering $f_2$,
we note that $\theta_1 = c_2^2$ must be nonzero. Since  $T = c_2$ we must also have $S = c_2^2 + c_2 = 0$. Therefore $c_2 = 1$. From the trace condition, we have $$\Tr_m\left(\frac{\theta_2^{q+1}}{\theta_1^2}\right) = \Tr_m\left( \frac{\theta_2}{\theta_1}\right) = \Tr_m(1) = 0.$$ Therefore $m$ must be even, so $\mathbb{F}_q$ contains a primitive cube root of unity $\omega$. We have 
 $$L(G_1(L(x))) = (1+\omega)f_2(x)$$ where $L(x) = x^q + \omega x$ which is a bijection since $\omega \notin U$ when $m$ is even.
 This completes the proof of Theorem~\ref{thm-small-field}.




\section{Main result}

\begin{theorem}
	\label{thm-main}
	Suppose that $m\ge 4$ is an integer and let $q=2^m$. Let $f:\F_{q^2}\ra\F_{q^2}$ be 
	given by $f(x)=x^{3q}+a_1x^{2q+1}+a_2x^{q+2}+a_3x^3$
	where $a_1,a_2,a_3\in\F_{q^2}$. If $f$ is APN, then
	$f$ is affine equivalent to $G_1(x)=x^3$ or
	$f$ is affine equivalent to $G_2(x)=x^{2^{m-1}+1}$.
\end{theorem}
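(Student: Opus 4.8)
The plan is to show that every APN Kim-type function is affine equivalent to one all of whose coefficients lie in the subfield $\F_q$, and then to invoke Theorem~\ref{thm-small-field}. Since affine equivalence preserves the APN property, once the coefficients have been pushed into $\F_q$ no further APN bookkeeping is required. The main tool is the family of affine maps that preserve the Kim-type shape. On the inner side, any $\F_q$-linear substitution $x\mapsto cx+d\bar x$ sends each monomial $x^i\bar x^j$ with $i+j=3$ to a combination of the same four monomials, so it keeps $f$ Kim-type; in particular the pure scaling $x\mapsto cx$ multiplies $(a_1,a_2,a_3)$ by $(w,w^2,w^3)$, where $w=c^{1-q}$ runs through all of $U$ as $c$ runs through $\F_{q^2}^*$. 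On the outer side, composing with $L(u)=\alpha\bar u+\beta u$ replaces $(a_1,a_2,a_3)$ by explicit fractional-linear expressions in $\alpha,\beta$ and the $a_i$, and $L$ is a bijection exactly when $\beta/\alpha\notin U$ (with $\alpha=0$, $\beta\neq0$ allowed). Throughout I would use the principle already exploited in the proof of Theorem~\ref{thm-small-field}, that $x\mapsto\bar x+tx$ is a bijection if and only if $t\notin U$.

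First I would reduce to the case $a_1\in\F_q$ by Lemma~\ref{lem-a1-small-field}, whose proof (as noted there) applies in the presence of the $a_2$-term. It then remains to move $a_2$ and $a_3$ into $\F_q$, and the natural plan is a case analysis on $a_3$. When $a_3\notin U\cup\{0\}$, the outer map with $\alpha/\beta=a_3$ is bijective (because $a_3\notin U$) and sends $a_3$ to $0$, since the numerator $\alpha+\beta a_3$ vanishes in characteristic two; one is then left with the simpler shape $\bar x^3+a_1'\bar x^2x+a_2'\bar x x^2$, and the residual inner scalings by $U$, together with the fact that squaring is a bijection of $U$ (as $|U|=q+1$ is odd), can be used to force $a_2'\in\F_q$. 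The complementary case $a_3\in U$ is where this clean outer normalization degenerates, and it requires a separate argument.

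The crux, and the main obstacle, is controlling the interaction between the three coefficients: an outer map chosen to normalize $a_3$ generally disturbs $a_1$ and $a_2$, and the inner scalings that normalize $a_2$ act on $a_3$ through the cube $w^3$. Whether a prescribed value of $a_3$ can be reached by inner scaling hinges on whether cubing is a bijection of $U$, which holds precisely when $m$ is even ($3\nmid q+1$) and fails when $m$ is odd ($3\mid q+1$); this is exactly the parity split in Theorem~\ref{thm-LLHQ}, and I expect it to force the two branches to be carried out separately. In each branch the delicate points are the degenerate configurations, where a candidate map $x\mapsto\bar x+tx$ fails to be bijective because $t\in U$, and I would resolve them with Lemma~\ref{lem-quad-U}: the APN hypothesis, read through the defining relations of $\Gamma_1$ and $\Gamma_2$, produces quadratic equations with coefficients in $\F_q$ having a root in $U$, so that Lemma~\ref{lem-quad-U} either pins down the offending coefficient or excludes the configuration. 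This is the step where the $\Gamma_1/\Gamma_2$ relations are genuinely needed, rather than merely the transformation toolkit.

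Once $a_1,a_2,a_3$ all lie in $\F_q$, the function is a subfield Kim-type function, still APN because affine equivalence preserves APN-ness, and Theorem~\ref{thm-small-field} then exhibits the explicit affine equivalence with $G_1(x)=x^3$ or $G_2(x)=x^{2^{m-1}+1}$, completing the proof. I anticipate that the bulk of the work lies in disposing of the $a_3\in U$ case and in confirming that the normalizing maps can be chosen bijective; the remaining fractional-linear and trace computations should be explicit but routine, modeled closely on the proof of Theorem~\ref{thm-small-field}.
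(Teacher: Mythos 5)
There is a genuine gap. Your entire argument rests on the claim that every APN Kim-type function can be carried, by shape-preserving affine maps (inner substitutions $x\mapsto cx+d\bar x$ and outer maps $u\mapsto\alpha\bar u+\beta u$), to one whose coefficients all lie in $\F_q$, after which Theorem~\ref{thm-small-field} finishes. You never establish this claim: the two places where it could fail are exactly the places you defer (``requires a separate argument'', ``I anticipate that the bulk of the work lies in\dots''). The paper's proof shows that the difficulty is real and is located somewhere your case split does not look: the critical case is not $a_3\in U$ but $a_1/a_2\in U$. In that case (Proposition~\ref{prop-a1-a2-in-U}) the APN conditions force either subfield coefficients or a genuinely non-subfield two-parameter family $a_1=(u^3+z)^2/(u(u^2+z)^2)$, $a_2=(u^3+z)^2/(u^2+z)^2$, $a_3=uz^2(u+1)^2/(u^2+z)^2$ with $u,z\in U$. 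The paper does \emph{not} normalize this family into $\F_q$; instead (Proposition~\ref{prop-G2-equiv}) it exhibits a direct identity $L_1(G_2(L_2(x)))=c_0f(x)$ with $L_2(x)=x^{2q}+sx^2$, a map involving the squaring Frobenius that is outside the toolkit of transformations you enumerate. Nothing in your proposal produces this family, recognizes it as the obstruction, or explains how your normalization would dispose of it; it is entirely possible that the orbit of these coefficient triples under your shape-preserving group simply does not meet the subfield, in which case your strategy cannot be completed as stated.

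There are also unresolved interferences among your normalizations. The inner scaling $x\mapsto cx$ that forces $a_2'\in\F_q$ multiplies $a_1$ by $w=c^{1-q}\in U$, destroying the normalization $a_1\in\F_q$ you performed first; the outer map $\alpha\bar u+\beta u$ with $\alpha/\beta=a_3$ that kills the $x^3$ term changes $a_1$ and $a_2$ and requires dividing by the new leading coefficient $\alpha\bar a_3+\beta$, which need not lie in $\F_q$; and Lemma~\ref{lem-a1-small-field} is only guaranteed to preserve $a_2=0$, not $a_3=0$, so reapplying it after your $a_3$-killing step may reintroduce the very term you removed. Each of these might be repairable, but collectively they mean the proposal is an outline of a hoped-for reduction rather than a proof. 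By contrast, the paper's argument splits on $a_2=0$ versus $a_2\neq0$ and, within the latter, on $a_1/a_2\notin U\cup\{a_3^q\}$, $a_1/a_2\in U$, and $a_1/a_2=a_3^q$; it extracts concrete polynomial identities from the $\Gamma_1/\Gamma_2$ conditions (via resultants, Lemma~\ref{lem-quad-U}, and explicit trace computations) in each branch, and only some branches end in Theorem~\ref{thm-small-field}.
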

\begin{proof}
	The proof of the theorem is obtained by combining 
	the results of 
	Lemma~\ref{lem-a1-small-field},
	Theorem~\ref{thm-LLHQ}, and
	Propositions \ref{prop-make-a2-zero}
	through \ref{prop-a1-a2-eq-a3q}.
\end{proof}

\begin{proposition}
	\label{prop-make-a2-zero}
	Let $f:\F_{q^2}\ra\F_{q^2}$ be 
	given by $f(x)=x^{3q}+a_1x^{2q+1}+a_2x^{q+2}+a_3x^3$
	where $a_1\in\F_{q}$, $a_2\in\F_{q^2}^*$ and  $a_3\in\F_{q^2}$.
	If $a_1/a_2\not\in U$ and $a_1/a_2\neq a_3^q$,
	then $f$ is affine equivalent to
	$h(x)= x^{3q}+a_1'x^{2q+1}+a_3'x^3$ where $a_1'\in\F_{q}$
	and $a_3'\in\F_{q^2}$.
\end{proposition}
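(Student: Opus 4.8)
The plan is to produce the desired form by post-composing $f$ with a single $\F_2$-linear bijection chosen to annihilate the $x^{q+2}$ term, and then cleaning up with a scalar normalization and Lemma~\ref{lem-a1-small-field}. As recalled just before this proposition, a map of the shape $L(u)=u^q+tu$ is a bijection of $\F_{q^2}$ precisely when $t\notin U$, and composing $f$ with such an $L$ on the outside yields an affine-equivalent function. So the whole strategy reduces to making one good choice of $t$.

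First I would compute $L(f(x))=f(x)^q+tf(x)$. Using $x^{q^2}=x$ for $x\in\F_{q^2}$ together with $a_1\in\F_q$, the Frobenius simply permutes the four monomials, giving $f(x)^q=\bar a_3 x^{3q}+\bar a_2 x^{2q+1}+a_1 x^{q+2}+x^3$. Adding $tf(x)$ and collecting terms, the coefficient of $x^{q+2}$ becomes $a_1+ta_2$. In characteristic $2$ this vanishes for $t=a_1/a_2$, and the first hypothesis $a_1/a_2\notin U$ is exactly what guarantees that $L(u)=u^q+(a_1/a_2)u$ is a bijection. With this choice the $x^{q+2}$ term disappears and one is left with
\[
L(f(x))=\bigl(\bar a_3+\tfrac{a_1}{a_2}\bigr)x^{3q}+\bigl(\bar a_2+\tfrac{a_1^2}{a_2}\bigr)x^{2q+1}+\bigl(1+\tfrac{a_1a_3}{a_2}\bigr)x^3 .
\]

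It remains to normalize the leading coefficient. The coefficient of $x^{3q}$ equals $\bar a_3+a_1/a_2=a_3^q+a_1/a_2$, which is nonzero precisely by the second hypothesis $a_1/a_2\neq a_3^q$. Dividing $L(f)$ by this nonzero scalar is an $\F_{q^2}$-linear (hence $\F_2$-linear) bijection, and it turns the function into one of the shape $x^{3q}+b_1x^{2q+1}+b_3x^3$ with $b_1,b_3\in\F_{q^2}$. Applying Lemma~\ref{lem-a1-small-field} to this function then produces an affine-equivalent $h(x)=x^{3q}+a_1'x^{2q+1}+a_3'x^3$ with $a_1'\in\F_q$ and $a_3'\in\F_{q^2}$; chaining the three affine equivalences (apply $L$, rescale, invoke the lemma) shows $f$ is affine equivalent to $h$, as claimed.

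There is no deep obstacle here: the content of the proposition is the observation that the two hypotheses are exactly calibrated so that the \emph{same} parameter $t=a_1/a_2$ makes $L$ a bijection (first hypothesis) while keeping the $x^{3q}$ coefficient nonzero (second hypothesis). The only point requiring care is the bookkeeping of the Frobenius action $x\mapsto x^q$ on the exponents $3q,\,2q+1,\,q+2,\,3$, which must be tracked correctly so that the coefficient computations above come out as stated.
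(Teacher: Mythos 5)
Your proof is correct and follows essentially the same route as the paper: apply $L(u)=u^q+(a_1/a_2)u$ to kill the $x^{q+2}$ term (a bijection since $a_1/a_2\notin U$), divide by the resulting $x^{3q}$ coefficient $a_3^q+a_1/a_2$ (nonzero since $a_1/a_2\neq a_3^q$), and finish with Lemma~\ref{lem-a1-small-field}. The coefficient computations match the paper's exactly.
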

\begin{proof}
Let $t=a_1/a_2$ and let $h_0(x)=L(f(x))$ where $L(z)=z^q+tz$.
Then
\[
h_0(x)=\left(\frac{a_1}{a_2}+a_3^q\right)x^{3q}
+\left(\frac{a_1^2}{a_2}+a_2^q\right)x^{2q+1}
+\left(\frac{a_1a_3}{a_2}+1\right)x^{3}.
\]
Since $a_1/a_2\not\in U$, functions $f$ and $h_0$ are affine equivalent.
Now let
\[
h_1(x)=\left(\frac{a_1}{a_2}+a_3^q\right)^{-1} \cdot h_0(x)
\]
and apply Lemma~\ref{lem-a1-small-field} to function $h_1$.
\end{proof}

We first deal with the special cases when two coefficients
of the function vanish.

\begin{proposition}
	\label{prop-a1-or-a3-is-zero}
	Let $f:\F_{q^2}\ra\F_{q^2}$ be 
	given by $f(x)= x^{3q}+a_1x^{2q+1}+a_3x^3$ where $a_1,a_3\in\F_{q^2}$
	and $a_1=0$ or $a_3=0$. 
	If $f$ is APN, then $f$ is affine equivalent to $G_1(x)=x^3$.
\end{proposition}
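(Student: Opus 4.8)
The plan is to dispose of the two cases $a_1=0$ and $a_3=0$ separately; in each case the goal is to compose $f$ with a linear bijection, or to rescale the variable, so as to arrive at a nonzero scalar multiple of $x^3$ or of $x^{3q}$. Both of these are affine equivalent to $G_1$, since the Frobenius $z\mapsto z^q$ is a linear bijection and hence $x^{3q}$ and $x^3$ are affine equivalent.

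I would first treat the case $a_1=0$, where $f(x)=x^{3q}+a_3x^3$. Since $a_1=0\in\F_q$, Theorem~\ref{thm-LLHQ} applies, and as $f$ is APN it forces $\theta_1=1+a_3^{q+1}\neq 0$, that is, $a_3\notin U$. Because $(a_3^q)^{q+1}=a_3^{q+1}\neq 1$, we also have $a_3^q\notin U$, so $L(z)=z^q+a_3^q z$ is a linear bijection. The key point is that raising $f(x)$ to the $q$-th power interchanges the exponents $3q$ and $3$, so in $L(f(x))=f(x)^q+a_3^q f(x)$ the two $x^{3q}$ contributions cancel and one is left with $L(f(x))=(1+a_3^{q+1})x^3$, a nonzero multiple of $G_1$. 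Hence $f$ is affine equivalent to $G_1$. (The subcase $a_3=0$ is included by taking $t=0$, i.e.\ $L$ the Frobenius.)

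It remains to treat the case $a_3=0$, where $f(x)=x^{3q}+a_1x^{2q+1}$ with $a_1\in\F_{q^2}$. Here I would first move $a_1$ into the subfield $\F_q$ by a scaling substitution $x\mapsto\lambda x$ followed by normalization of the leading coefficient: this replaces $a_1$ by $a_1\lambda^{1-q}$ and scales the (vanishing) coefficient of $x^3$ by a power of $\lambda^{1-q}$, so the $x^3$ term stays absent. Since $\lambda\mapsto\lambda^{1-q}$ surjects onto $U$, and every element of $\F_{q^2}^*$ factors uniquely as an element of $\F_q$ times an element of $U$, we can arrange $a_1\in\F_q$ while keeping $a_3=0$. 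Now all coefficients lie in $\F_q$, so the subfield analysis recorded after Theorem~\ref{thm-LLHQ} applies: with $a_2=a_3=0$ one computes $S=a_1^2$ and $T=1$, and since $f$ is APN we must have $S=0$ or $ST=0$, either of which forces $a_1=0$. Thus $f$ equals $x^{3q}$, hence is affine equivalent to $G_1$.

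The routine parts are the two linear-algebra computations of $L(f(x))$ and of the rescaled coefficients. The one step needing care is the reduction in the case $a_3=0$: one must verify that pushing $a_1$ into $\F_q$ does not resurrect the $x^3$ term, so that the clean subfield criterion ``$S=0$ or $ST=0$'' can be invoked, rather than blindly quoting Lemma~\ref{lem-a1-small-field}, whose statement as given controls only the $x^{q+2}$ coefficient.
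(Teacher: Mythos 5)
Your proof is correct and follows essentially the same route as the paper: the case $a_1=0$ is handled by the same bijection $L(z)=z^q+a_3^qz$ (valid because APN forces $\theta_1=1+a_3^{q+1}\neq 0$), and the case $a_3=0$ reduces to $a_1=0$ via the conditions of Theorem~\ref{thm-LLHQ}. Your extra step of first pushing $a_1$ into $\F_q$ by the scaling $x\mapsto\lambda x$ (checking that the $x^3$ coefficient stays zero) is a sensible precaution that the paper leaves implicit, since it relies on the reduction of Lemma~\ref{lem-a1-small-field} having been performed within the proof of Theorem~\ref{thm-main} before this proposition is invoked.
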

\begin{proof}
	Assume $a_1=a_2=0$. Then $\t_1=a_3^{q+1}+1$. If $a_3\in U$, then
	$\t_1=0$ and $f$ is not APN. If $a_3\not\in U$, let $L(z)=z^q+tz$
	where $t=a_3^q$. Then $L$ is a permutation of $\F_{q^2}$
	and $L(f(x))=(a_3^{q+1}+1)x^3$.
	
	Next assume $a_2=a_3=0$. Then $\t_1=(a_1+1)^2$.
	After plugging into (\ref{Eq-4}) we get $a_1^4(a_1+1)^2=0$, and
	after plugging into (\ref{Eq-5}) we get $a_1^2(a_1+1)^2=0$.
	If $f$ is APN, then $\t_1\neq 0$, hence $a_1=0$ and $f(x)=x^{3q}$
	which is affine equivalent to $G_1(x)=x^3$.
\end{proof}

\begin{proposition}
	\label{prop-Gamma1}
	Let $f:\F_{q^2}\ra\F_{q^2}$ be 
	given by $f(x)= x^{3q}+a_1x^{2q+1}+a_3x^3$ where $a_1\in\F_{q}$
	and $a_3\in\F_{q^2}$. If $(a_1,0,a_3)\in\Gamma_1$ as defined
	in equation (\ref{Eq-4}) then $f$ is affine equivalent to $G_1(x)=x^3$ or
	$f$ is affine equivalent to $G_2(x)=x^{2^{m-1}+1}$.
\end{proposition}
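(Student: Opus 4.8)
The plan is to specialize the constants of \eqref{theta} to the triple $(a_1,0,a_3)$ and then reduce to the subfield situation already settled in Theorem~\ref{thm-small-field}. With $a_2=0$ and $a_1\in\F_q$ one has $\theta_2=\bar\theta_2=a_1$, $\theta_3=a_1\bar a_3$, $\theta_4=a_1^2$, and $\theta_1=1+a_1^2+a_3\bar a_3\in\F_q$. The trace condition in \eqref{Eq-4} becomes $\Tr_m(a_1^2/\theta_1^2)=0$, and a direct expansion shows that the cubic condition in \eqref{Eq-4} factors as
\[
a_1^2\bigl(\theta_1^2+\theta_1+a_1(a_3+\bar a_3)\bigr)=0 .
\]
If $a_1=0$ this is automatic and $f(x)=x^{3q}+a_3x^3$ is affine equivalent to $G_1$ by Proposition~\ref{prop-a1-or-a3-is-zero}, so I may assume $a_1\neq0$. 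Then the displayed condition yields $a_3+\bar a_3=\theta_1(1+\theta_1)/a_1$, while the definition of $\theta_1$ gives $a_3\bar a_3=1+a_1^2+\theta_1$; together these pin down the minimal polynomial of $a_3$ over $\F_q$. In particular $a_3\in\F_q$ exactly when $\theta_1=1$.

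First I would dispose of the subfield case $\theta_1=1$: here $a_3+\bar a_3=0$ forces $a_3\in\F_q$, so all coefficients of $f$ lie in $\F_q$ and Theorem~\ref{thm-small-field} applies verbatim. The substance of the proposition is therefore the case $\theta_1\notin\{0,1\}$, where $a_3\in\F_{q^2}\setminus\F_q$. Here I would produce an affine equivalence carrying $f$ to a Kim-type function all of whose coefficients lie in $\F_q$ and then finish by Theorem~\ref{thm-small-field}. It is worth noting that a mere input scaling $x\mapsto\lambda x$ or a single output map $z\mapsto z^q+tz$ cannot rationalize $a_3$ while $a_1\neq0$: under $z\mapsto \alpha z+\beta z^q$ the coefficients of $x^{2q+1}$ and $x^{q+2}$ are $\alpha a_1$ and $\beta a_1$, so forcing both ratios into $\F_q$ would force $\alpha,\beta$ to be $\F_q$-proportional, leaving $a_3$ unchanged. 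This obstruction disappears once one allows a genuine conjugate-linear substitution $x\mapsto \mu x+\nu\bar x$, which mixes all four coefficients and supplies the extra degrees of freedom; composing such a substitution with a suitable output map $z\mapsto z^q+tz$ produces three $\F_q$-linear conditions (``imaginary parts zero'') in enough parameters to be solved. As convenient targets one may use that $G_1$ is affine equivalent to the monomial $x^{3q}$ (apply the $q$-th power map on the output) and that $G_2$ is affine equivalent to the monomial $x^{q+2}$ (since $G_2(x)^2=x^{2^m+2}=x^{q+2}$ and squaring is an $\F_2$-linear bijection).

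The main obstacle is the explicit construction in the case $\theta_1\notin\{0,1\}$: choosing $\mu,\nu,t$ so that the transformed function is defined over $\F_q$, checking that the associated maps are bijections (equivalently that the relevant ratio avoids $U$), and verifying that the resulting subfield function still satisfies the APN hypotheses of Theorem~\ref{thm-small-field}. I expect the relation $a_3+\bar a_3=\theta_1(1+\theta_1)/a_1$ together with $a_3\bar a_3=1+a_1^2+\theta_1$ to make the coefficient-matching equations solvable, and the trace condition $\Tr_m(a_1^2/\theta_1^2)=0$ to be exactly what guarantees a bijective rationalizing map; the parity of $m$, through whether $3\mid q+1$ and hence whether cubing is onto $U$, should decide whether the subfield image falls into the $G_1$ branch or the $G_2$ branch of Theorem~\ref{thm-small-field}.
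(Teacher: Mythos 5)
Your reduction is sound as far as it goes: the specialization of the $\theta_i$, the factorization of the cubic condition as $a_1^2\bigl(\theta_1^2+\theta_1+a_1(a_3+\bar a_3)\bigr)=0$, the disposal of $a_1=0$ via Proposition~\ref{prop-a1-or-a3-is-zero}, and the observation that $a_3\in\F_q$ exactly when $\theta_1=1$ (given $\theta_1\neq0$) are all correct and consistent with the paper's setup. But the case you call ``the substance of the proposition'' --- $a_1\neq0$ and $a_3\notin\F_q$ --- is left entirely as a plan (``I expect the coefficient-matching equations to be solvable''), so the proof is not complete. Worse, the plan is aimed at the wrong target: you never bring the trace condition $\Tr_m(a_1^2/\theta_1^2)=0$ to bear on this case, and that condition is exactly what kills it. The paper shows the case is \emph{vacuous}: writing $a_3=yz$ with $y\in\F_q^*$, $z\in U$, the cubic condition becomes $a_1yz^2+(a_1^4+a_1^2+y^4+y^2)z+a_1y=0$, while the trace condition supplies $t\in\F_q$ with $(t^2+t)(a_1^2+y^2+1)+a_1=0$; the resultant of these two polynomials with respect to $y$ is $a_1^2(Az^2+B)(Bz^2+A)$ with $A,B\in\F_q$, and since $z\in U\setminus\{1\}$ implies $z^2\notin\F_q$, this can vanish only when $A=B=0$, which forces $a_1=0$. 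Hence for $a_1\neq0$ one必须 have $z=1$, i.e.\ $a_3\in\F_q$, and Theorem~\ref{thm-small-field} finishes everything. (Ignore the stray non-ASCII word above: the point is that $a_1\neq 0$ forces $z=1$.)

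So the concrete gap is twofold. First, no argument is actually given for the main case; a heuristic about conjugate-linear substitutions $x\mapsto\mu x+\nu\bar x$ is not a proof, and you would still have to verify that the rationalized function satisfies the hypotheses of Theorem~\ref{thm-small-field} (affine equivalence does not obviously transport membership in $\Gamma_1\cup\Gamma_2$, since those sets are defined coefficient-wise). Second, the two relations you derive ($a_3+\bar a_3=\theta_1(1+\theta_1)/a_1$ and $a_3\bar a_3=1+a_1^2+\theta_1$) come only from the cubic condition and the definition of $\theta_1$; by themselves they do not constrain $a_3$ to $\F_q$ and do not yield a contradiction. The missing idea is to combine them with the trace condition --- e.g.\ via the resultant computation above, or some equivalent elimination --- to show that $a_1\neq 0$ and $a_3\notin\F_q$ cannot coexist inside $\Gamma_1$. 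Without that step the proposition is unproved.
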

\begin{proof}
	We will show that $a_3\in\Fq$ or $a_1=0$
	and then the conclusion will follow
	by applying Theorem~\ref{thm-small-field}
	or Proposition~\ref{prop-a1-or-a3-is-zero} accordingly.
	
    If $a_1=0$ or $a_3=0$, then the conclusion follows immediately.
    Hence assume that $a_1\neq 0$ and $a_3\neq 0$.
	Let $a_3=yz$ where $y\in\F_{q}^*$ 
	and $z\in U$. After plugging this along with $a_2=0$ into equation (\ref{Eq-4}), dividing by $a_1^2$ and simplifying we get
	\begin{equation}
	\label{eq4-new}
		a_1yz^2+(a_1^4+a_1^2+y^4+y^2)z+a_1y=0.
	\end{equation}
	By the definition of set $\Gamma_1$ we have 
	\[
	\Tr_m \left( \frac{\t_2^{q+1}}{\t_1^2}\right)
	=
	\Tr_m \left( \frac{a_1^2}{(a_1^2+y^2+1)^2}\right)
	=
	\Tr_m \left( \frac{a_1}{a_1^2+y^2+1}\right)
	=0.
	\]
	Hence there exist $t\in\F_q$ such that
	\begin{equation}
	\label{eq-G1-t}
	(t^2+t)(a_1^2+y^2+1)+a_1=0.
	\end{equation}
	The resultant of the left-hand sides
	of (\ref{eq4-new}) and (\ref{eq-G1-t}) with respect to $y$
	is $a_1^2(Az^2+B)(Bz^2+A)$ where $A=(a_1t+t+1)t^3$
	and $B=(a_1t+a_1+t)(t+1)^3$. 
	
	If $z=1$, then $a_3\in\Fq$. Otherwise,
	we have $A,B\in\F_q$ but $z\not\in\F_q$. Then the resultant
	can vanish only if $A=B=0$. This occurs only if $(a_1,t)=(0,0)$
	or $(a_1,t)=(0,1)$. This completes the proof.
\end{proof}

\begin{proposition}
	\label{prop-Gamma2}
	Let $f:\F_{q^2}\ra\F_{q^2}$ be 
	given by $f(x)= x^{3q}+a_1x^{2q+1}+a_3x^3$ where $a_1\in\F_{q}$
	and $a_3\in\F_{q^2}$. If $(a_1,0,a_3)\in\Gamma_2$ as defined
	in equation (\ref{Eq-5}) then $f$ is affine equivalent to $G_1(x)=x^3$ or
	$f$ is affine equivalent to $G_2(x)=x^{2^{m-1}+1}$.
\end{proposition}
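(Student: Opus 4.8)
The plan is to mirror the proof of Proposition~\ref{prop-Gamma1}, except that here the computation will be shorter because Lemma~\ref{lem-quad-U} applies decisively. As in that proof, I would first dispose of the degenerate cases $a_1=0$ or $a_3=0$ by invoking Proposition~\ref{prop-a1-or-a3-is-zero}, so from now on assume $a_1\in\F_q^*$ and $a_3\in\F_{q^2}^*$. Writing $a_3=yz$ in the unique way with $y\in\F_q^*$ and $z\in U$, I note that $\bar{a}_3=y/z$, so with $a_2=0$ the four constants in \eqref{theta} become $\theta_1=1+a_1^2+y^2$, $\theta_2=\bar{\theta}_2=a_1$, $\theta_3=a_1y/z$ and $\bar{\theta}_3=a_1yz$, all expressed through the $\F_q$-parameters $a_1,y$ and the circle element $z$.

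Next I would substitute these into the defining equation of $\Gamma_2$ in \eqref{Eq-5}, divide through by $a_1$ (legitimate since $a_1\neq0$), and clear the denominator by multiplying by $z$. This should collapse \eqref{Eq-5} to a quadratic in $z$ of the shape
\[
a_1^2 y\,z^2 + a_1\theta_1\,z + y(\theta_1^2+a_1^2)=0,
\]
whose three coefficients all lie in $\F_q$.

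Now comes the decisive point, and the reason this case is easier than $\Gamma_1$: here the leading coefficient $a_1^2y$ and the constant coefficient $y(\theta_1^2+a_1^2)$ are genuinely different, whereas in the $\Gamma_1$ equation they coincided and forced the resultant detour. Recall that $z=1$ is equivalent to $a_3\in\F_q$, since $U\cap\F_q=\{1\}$. If instead $a_3\notin\F_q$, then $z\in U\setminus\{1\}$ is a root of the above $\F_q$-quadratic, so Lemma~\ref{lem-quad-U} forces the leading and constant coefficients to be equal, i.e.\ $a_1^2y=y(\theta_1^2+a_1^2)$. Since $y\neq0$ this gives $\theta_1^2=0$, hence $\theta_1=0$, contradicting the requirement $\theta_1\neq0$ that is built into the definition of $\Gamma_2$. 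Therefore $z=1$, that is, $a_3\in\F_q$.

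Having shown $a_3\in\F_q$, all of $a_1$, $a_2=0$ and $a_3$ lie in the subfield $\F_q$, so the conclusion follows at once from Theorem~\ref{thm-small-field}. The only step demanding any care is the bookkeeping that reduces \eqref{Eq-5} to the displayed quadratic; once that is in hand, Lemma~\ref{lem-quad-U} does the remaining work, and, in contrast to Proposition~\ref{prop-Gamma1}, neither a resultant computation nor the trace condition is needed.
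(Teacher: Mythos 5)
Your proof is correct and follows essentially the same route as the paper: write $a_3=yz$ with $y\in\F_q$, $z\in U$, reduce \eqref{Eq-5} to the quadratic $a_1^2yz^2+a_1\theta_1 z+y(\theta_1^2+a_1^2)=0$ over $\F_q$ (the paper's displayed equation is exactly this, with $\theta_1=(a_1+y+1)^2$ expanded), apply Lemma~\ref{lem-quad-U} to force $y\theta_1^2=0$, and conclude $a_3\in\F_q$ via Theorem~\ref{thm-small-field}. Your explicit separate treatment of $a_1=0$ before dividing by $a_1$ is in fact slightly more careful than the paper's, which performs that cancellation silently.
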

\begin{proof}
 We will show that $a_3\in\F_q$ and then the conclusion will follow
 by applying Theorem~\ref{thm-small-field}.
 
 Again let $a_3=yz$ where $y\in\F_{q}$ 
 and $z\in U$. After plugging this along with $a_2=0$ into equation (\ref{Eq-5}) and simplifying we get
 \begin{equation*}
 (a_1^2y)z^2+(a_1^3+a_1y^2+a_1)z+a_1^4y+a_1^2y+y^5+y=0.
 \end{equation*}
 If $z=1$ then $a_3\in\F_q$ and we are done. Otherwise, by Lemma~\ref{lem-quad-U} we must have
 \[
 (a_1^2y)+(a_1^4y+a_1^2y+y^5+y)=y(a_1+y+1)^4=0.
 \]
 If $y=0$, then $a_3\in\F_q$. If $(a_1+y+1)^4=0$, then
 $\t_1=(a_1+y+1)^2=0$ and $(a_1,0,a_3)\not\in\Gamma_2$
 by the first condition of equation (\ref{Eq-5}).
\end{proof}

\begin{proposition}
	\label{prop-a1-a2-in-U}
	Let $f:\F_{q^2}\ra\F_{q^2}$ be 
	given by $f(x)= x^{3q}+a_1x^{2q+1}+a_2x^{q+2}+a_3x^3$ where $a_1\in\F_{q}$, $a_2\in\F_{q^2}^*$ and  $a_3\in\F_{q^2}$, and assume that $a_1/a_2\in U$.
	If $f$ is APN, 
	then $m$ is even, and 
	furthermore
	$a_1,a_2,a_3\in\F_q$ or
	there exist $u,z\in U$ such that
	\begin{eqnarray*}
	a_1 & = & \frac{(u^3+z)^2}{u(u^2+z)^2} \\
	a_2 & = & \frac{(u^3+z)^2}{(u^2+z)^2}  \\
	a_3 & = & \frac{uz^2(u+1)^2}{(u^2+z)^2}.
	\end{eqnarray*}
\end{proposition}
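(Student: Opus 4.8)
The plan is to use the hypothesis to put $a_2$ in a normal form, then rule out membership in $\Gamma_1$ entirely, so that the analysis of $\Gamma_2$ simultaneously forces $m$ even and produces the explicit formulas. Since $a_1\in\F_q$ and $a_2\in\F_{q^2}^*$, the condition $a_1/a_2\in U$ is equivalent to $a_1^2=a_2\bar a_2$, i.e.\ to $\theta_4=0$ (and in particular $a_1\neq 0$, since $0\notin U$). Writing $a_2=a_1u$ with $u\in U$ and recomputing the quantities \eqref{theta} under $\theta_4=0$, the term $a_1^2+a_2\bar a_2$ cancels in characteristic $2$ and one is left with the convenient relations
\[
\theta_1=1+a_3\bar a_3,\qquad \theta_2=\frac{a_1(u+a_3)}{u},\qquad \theta_3=\frac{\bar\theta_2}{u},\qquad \bar\theta_3=u\theta_2 .
\]
The last two identities are the lever for the whole argument, since they let the defining equations of $\Gamma_1$ and $\Gamma_2$ be factored.

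For $\Gamma_1$ I would show no solution exists. Using the relations above, the polynomial in \eqref{Eq-4} becomes, after multiplying by $u$, $\theta_2\bar\theta_2\,(u\theta_1+\theta_2+u^2\bar\theta_2)=0$; as $\theta_2=0$ would give $a_3=u\in U$ and hence $\theta_1=0$, we must have $\theta_2+u^2\bar\theta_2=u\theta_1$. Put $\rho=\theta_2/(u^2\bar\theta_2)$. A direct computation gives $\bar\rho=\rho^{-1}$, so $\rho\in U$, and
\[
\rho+\rho^{-1}=\frac{(\theta_2+u^2\bar\theta_2)^2}{u^2\theta_2\bar\theta_2}=\frac{\theta_1^2}{\theta_2\bar\theta_2},
\]
so that $\theta_2\bar\theta_2/\theta_1^2=1/(\rho+\rho^{-1})$. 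Moreover $\rho=1$ would force $u\theta_1=0$, which is excluded, so $\rho\in U\setminus\{1\}$ and hence $\rho\notin\F_q$. The minimal polynomial of such a $\rho$ over $\F_q$ is $x^2+(\rho+\rho^{-1})x+1$, and its irreducibility is equivalent to $\Tr_m\bigl(1/(\rho+\rho^{-1})\bigr)=1$. Thus $\Tr_m(\theta_2\bar\theta_2/\theta_1^2)=1$, contradicting the trace condition defining $\Gamma_1$. Hence every APN function in our family lies in $\Gamma_2$, which by Theorem~\ref{thm-LLHQ} forces $m$ to be even.

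It remains to read the coefficients off from $\Gamma_2$. The key structural point is that, after the same substitutions, the expression in \eqref{Eq-5} reduces (divide by $\bar\theta_2$, multiply by $u$) to $\theta_1^2+u\theta_1\bar\theta_2+\theta_2^2+u^2\theta_2\bar\theta_2=0$, which in general is \emph{not} self-conjugate (in contrast to \eqref{Eq-4}, which is); so when the coefficients are not all in $\F_q$ this single $\F_{q^2}$-equation imposes two $\F_q$-conditions, namely itself and its $q$-th power. Writing $a_3=wz$ with $w\in\F_q$, $z\in U$ (the case $a_3=0$ being immediate) and clearing denominators turns both into quadratics in $a_1$ whose leading coefficients factor through $z(u+wz)+u^3(z+uw)=uz(1+u)^2+w(u^2+z)^2$. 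Eliminating the $a_1^2$ term between the quadratic and its conjugate expresses $a_1$ as a rational function of $u,z,w$; substituting this back (a resultant computation, like the others in the paper done in Magma) leaves one relation among $u,z,w$. Its nondegenerate factor is precisely $uz(1+u)^2+w(u^2+z)^2=0$, i.e.\ $w=uz(u+1)^2/(u^2+z)^2$; for this $w$ the quadratic collapses to a linear equation with root $a_1=(u^3+z)^2/\bigl(u(u^2+z)^2\bigr)$, and then $a_2=a_1u$ and $a_3=wz$ give the three displayed formulas. The complementary locus, where all coefficients lie in $\F_q$, is the first alternative of the conclusion.

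The elegant point, which I would foreground, is the collapse of $\Gamma_1$: once one sees that $\theta_2\bar\theta_2/\theta_1^2$ has the form $1/(\rho+\rho^{-1})$ with $\rho\in U\setminus\{1\}$, the trace condition is automatically violated. The step I expect to be the real obstacle is the $\Gamma_2$ elimination—organizing the two conjugate quadratic relations so that the elimination of $a_1$ and the ensuing resultant in $u,z,w$ factor transparently, and then matching the unwieldy rational solution to the compact closed forms in the statement; disentangling the degenerate factors so that they account for exactly the $\F_q$-case is the part most in need of care.
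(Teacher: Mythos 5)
Your treatment of $\Gamma_1$ is correct and is a genuinely different, more conceptual argument than the paper's. The identities $\theta_4=0$, $\theta_3=\bar\theta_2/u$, $\bar\theta_3=u\theta_2$ do reduce the polynomial condition in (\ref{Eq-4}) to $\theta_2\bar\theta_2\,(u\theta_1+\theta_2+u^2\bar\theta_2)=0$, and your observation that $\rho=\theta_2/(u^2\bar\theta_2)$ lies in $U\setminus\{1\}$ with $\rho+\rho^{-1}=\theta_1^2/(\theta_2\bar\theta_2)$ reduces the contradiction to the classical fact that every element of $U\setminus\{1\}$ is a root of an irreducible quadratic $x^2+bx+1$ over $\F_q$, whence $\Tr_m(1/b)=1$. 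The paper instead writes $a_2=a_1u$, $a_3=yz$, solves the polynomial condition explicitly for $a_1$, and shows by direct computation that the roots $w$ of $w^2+w=\theta_2^{q+1}/\theta_1^2$ satisfy $w+w^q\ne 0$. Your route is cleaner and explains structurally why $\Gamma_1$ is empty under $\theta_4=0$; either way this half is sound, and it correctly yields that an APN $f$ must lie in $\Gamma_2$, so $m$ is even.

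The $\Gamma_2$ half, however, has a genuine gap. After writing $a_3=yz$ and clearing denominators, the condition of (\ref{Eq-5}) becomes a quadratic $Aa_1^2+Ba_1+C=0$ with leading coefficient $A=D(yz+u)$, where $D=y(u^2+z)^2+uz(u+1)^2$ is exactly your ``nondegenerate factor.'' The displayed parametrization comes from the branch $A=0$, i.e.\ $D=0$, where the quadratic collapses to $Ba_1=C$. Your elimination step (combining the quadratic with its conjugate to remove $a_1^2$ and back-substituting) is only performed in the complementary branch $D\neq 0$, and there the resulting relation --- the paper's (\ref{eq-long-eq}) --- does \emph{not} have $D$ as its only nondegenerate factor: it also contains $(u+1)^2$ and $\bigl((u+z^2)y+z(u+1)\bigr)^2$, and $D$ itself is excluded in that branch by hypothesis. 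Both of these extra branches genuinely satisfy the polynomial identity of $\Gamma_2$; they are eliminated only by checking the \emph{trace} condition $\Tr_m(\theta_2\bar\theta_2/\theta_1^2)=0$, which the paper does by computing $\theta_2^{q+1}/\theta_1^2=(yz+1)(y+z)z/\bigl(y^2(z+1)^4\bigr)$ when $u=1$ and $\theta_2^{q+1}/\theta_1^2=u/(u+1)^2$ when $(u+z^2)y=z(u+1)$, and verifying in each case that the roots $w$ of $w^2+w=\theta_2^{q+1}/\theta_1^2$ satisfy $w+w^q=1$, hence $w\notin\F_q$. Your sketch never invokes the trace condition in the $\Gamma_2$ analysis and asserts that the only surviving factor is $D=0$; as written, it does not exclude these additional families, so the dichotomy in the conclusion (all coefficients in $\F_q$, or the stated $(u,z)$-parametrization) is not established. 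You flag this step as the one ``most in need of care,'' and indeed it is precisely where the missing argument lives.
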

\begin{proof}
	Under the assumptions listed in the proposition, we first prove that
	$(a_1,a_2,a_3)\not\in\Gamma_1$, and then we prove that
	$(a_1,a_2,a_3)\in\Gamma_2$ implies the conclusions given in the statement.
	
    From $a_1/a_2\in U$ it follows that $a_1\neq 0$.
	
	Let $a_2=a_1u$ where $u\in U$, and let $a_3=yz$ where $y\in\F_{q}$ 
	and $z\in U$. After substituting this into (\ref{Eq-4}),
	simplifying and clearing denominators we get
	\[
	a_1^2(a_1u^4y+a_1u^3z+u^2y^2z+a_1yz^2+a_1uz+u^2z)(yz+u)(uy+z)=0.
	\]
	Since $u/z\in U$, any of the last two factors can vanish only
	if $y=1$, but then $\t_1=y^2+1=0$ and $(a_1,a_2,a_3)\not\in\Gamma_1$.
	Since $a_1\neq 0$, the second factor must vanish, hence
	\[
	a_1(u^4y+u^3z+yz^2+uz)  =  u^2z(y^2+1).
	\]
	The right-hand side can not vanish, hence we have
	\begin{equation}
	\label{eq-a1sol-P27}
	a_1=\frac{u^2z(y^2+1)}{u^4y+u^3z+yz^2+uz}.
	\end{equation}
	
	We will obtain the desired conclusion by invoking the condition\break
	$\tr_{m}\left(\frac{\theta_2^{q+1}}{\theta_1^2}\right) =0$
	from the definition of set $\Gamma_1$ in equation (\ref{Eq-4}).
	After substituting the formula (\ref{eq-a1sol-P27}) for $a_1$
	into $\t_2$ we get
	\[
	\frac{\theta_2^{q+1}}{\theta_1^2}=
	\frac{u^3z(yz+u)(uy+z)}{(u^4y+u^3z+yz^2+uz)^2}.
	\]
	For the trace value condition to hold there must exist $w\in\F_q$
	such that
	\[
	(w^2+w)(u^4y+u^3z+yz^2+uz)^2+u^3z(yz+u)(uy+z)=0.
     \] 
     This equation factors as
     \[
     (Dw+z(yz+u))(Dw+u^3(uy+z))=0
     \]
     where $D=u^4y+u^3z+yz^2+uz$ is the denominator in (\ref{eq-a1sol-P27}).
     We already know that $D\neq 0$. Hence the values of $w$ are
     \[
     w_1=\frac{z(yz+u)}{D}, \ \ \  w_2=\frac{u^3(uy+z)}{D}.
     \] 
     Let
     \[
     r=\frac{a_1}{w_1}=\frac{u^2(y^2+1)}{yz+u}.
     \]
     Since $a_1,w_1\in \F_q$, we must have $r\in\F_q$. However
     \[
     r+r^q=\frac{(y+1)^2D}{u(yz+u)(uy+z)}\neq 0
     \]
     which finishes the proof that $(a_1,a_2,a_3)\not\in \Gamma_1$.
     
     Thus, if $f$ is APN, then
     we must have
     $(a_1,a_2,a_3)\in \Gamma_2$ and $m$ must be even.
     We will analyze this case now.
     
     Again assume that $a_2=a_1u$ and $a_3=yz$ where $y\in\F_q$
     and $u,z\in U$.
     By Proposition~\ref{prop-a1-or-a3-is-zero} 
     we can assume $a_3\neq 0$, hence $y\neq 0$.
     After plugging the expressions for $a_2$ and $a_3$ into (\ref{Eq-5})
     and simplifying we get
     \[
     a_1(uy+z)(Aa_1^2+Ba_1+C)=0
     \] 
     where $A=D(yz+u)$, $D=(u^2+z)^2y+uz(u^2+1)$, 
     $B=(y+1)^2u^3(uy+z)$ and
     $C=(y+1)^4u^2z$.
     Since $\t_1=1+y^2$, we have $y\neq 1$, hence
     $uy+z\neq 0$ and $yz+u\neq 0$.
     
     We will distinguish two cases: $A=0$ and $A\neq 0$,
     equivalently, $D=0$ and $D\neq 0$.
     
     First let us assume that $D=0$.     
     If $u^2=z$, then $u=z=1$ and $a_1,a_2,a_3\in\F_q$.
     Otherwise
      \[
     y=\frac{uz(u^2+1)}{(u^2+z)^2}
     \]
     and
     \[
     a_1=\frac{C}{B}=\frac{(y+1)^2z}{u(uy+z)}
     =\frac{(u^3+z)^2}{u(u^2+z)^2}.
     \]
     Then 
     \[
     a_2=a_1u=\frac{(u^3+z)^2}{(u^2+z)^2}
     \]
     and
     \[
     a_3=yz=\frac{uz^2(u+1)^2}{(u^2+z)^2}.
     \]
     With these values of $a_1,a_2,a_3$ we get after simplifications
     $\t_2^{q+1}/\t_1^2=1$, hence $\Tr_m(\t_2^{q+1}/\t_1^2)=0$ since $m$ is even,
     and $(a_1,a_2,a_3)\in\Gamma_2$.
     
     We complete the proof by analyzing the case $A\neq 0$.
     We note that $a_1$ is also a root of 
     \[
     Aa_1^2+Ba_1+C+\frac{A}{A^q}(A^qa_1^2+B^qa_1+C^q)=0,
     \]
     which after simplifications becomes
     $
     u(Ra_1+S)=0
     $
     where\\
     $
     R=(u+z)^4(y+1)^2y^2
     $ and $S=(y+1)^4(uyz^2+u^2z+uy+z)uz$. If $u=z$, then it must be
     $S=z^3(y+1)^5(z+1)^2=0$. Since $\t_1=(y+1)^2$, we get $u=z=1$ and $a_1,a_2,a_3\in\F_q$. If $y=0$, then we must have 
     $S=z^2u(u+1)^2=0$, hence $u=1$ and $a_1,a_2,a_3\in\F_q$.
     Otherwise we get
     \begin{equation}
     \label{eq-a1-dead-end}
     a_1=\frac{S}{R}
     =\frac{(y+1)^2(uyz^2+u^2z+uy+z)uz}{(u+z)^4y^2}.
     \end{equation}
    After plugging this into (\ref{Eq-5}) we get
    \begin{equation}
    \label{eq-long-eq}
    \frac{z^2(y+1)^4u^2(u+1)^2(u+yz)((u+z^2)y + z(u+1) )^2D}
    {(y(u+z)^2 )^4}=0.
    \end{equation}
    
    If $u=1$ then
    \[
    a_1=\frac{z(y+1)^2}{y(z+1)^2}
    \]
    and
    \[
    \frac{\t_2^{q+1}}{\t_1^2}=
    \frac{(yz+1)(y+z)z}{y^2(z+1)^4}.
    \]
    For the trace condition to be satisfied there must exist
    $w\in\F_q$ such that
    \begin{equation}
    \label{eq-if-u-1}
    (w^2+w)(y^2(z+1)^4)+(yz+1)(y+z)z=0
    \end{equation}
    which factors as
    \[
    (wyz^2 + wy + y + z)  (wyz^2 + wy + yz^2 + z)=0
    \]
    and the root of the first factor is
    \[
    w_1=\frac{y+z}{y(z^2+1)}
    \]
    for which we get
    \[
    w_1+w_1^q = 1.
    \]
    Hence $w_1\not\in\F_q$. Of course for the other root
    of (\ref{eq-if-u-1}) we have $w_2=w_1+1\not\in\F_q$.
    Therefore the case $u=1$ can not occur.
    
    Otherwise, $u\neq 1$ and since we are also assuming $D\neq 0$,
    we get from (\ref{eq-long-eq})
    \[
    y=\frac{z(u+1)}{u+z^2}
    \]
    and after plugging this into (\ref{eq-a1-dead-end}) we get
    \[
    a_1=\frac{u(z+1)^2}{(u+1)(u+z^2)}
    \]
    and after plugging the last two formulas into $\t_2^{q+1}/\t_1^2$
    and simplifying we get
    \[
    \frac{\t_2^{q+1}}{\t_1^2}=\frac{u}{(u+1)^2}.
    \]
     Since we require $\Tr_m(\t_2^{q+1}/\t_1^2)=0$,
     there must exist $w\in\F_q$ such that
    $(w^2+w)(u+1)^2+u=0$. This factors as
    \[
    (uw+w+1)(uw+u+w)=0
    \]
    and the roots are $w_1=1/(u+1)$ and $w_2=u/(u+1)$.
    However $w_1+w_1^q=w_2+w_2^q=1$ which means
    that $w_1,w_2\not\in\F_q$ and $\Tr_m(\t_2^{q+1}/\t_1^2)\neq 0$.
    Hence the case $A\neq 0$ does not produce APN functions,
    and the entire proof of the proposition is now completed.
\end{proof}

\begin{proposition}
	\label{prop-G2-equiv}
	Let $f:\F_{q^2}\ra\F_{q^2}$ be 
	given by $f(x)= x^{3q}+a_1x^{2q+1}+a_2x^{q+2}+a_3x^3$ where
	\begin{eqnarray*}
		a_1 & = & \frac{(u^3+z)^2}{u(u^2+z)^2} \\
		a_2 & = & \frac{(u^3+z)^2}{(u^2+z)^2}  \\
		a_3 & = & \frac{uz^2(u+1)^2}{(u^2+z)^2}
	\end{eqnarray*}
    for some $u,z\in U$ such that $u^2\neq z$, and $m$ is even.
    If $f$ is APN, then $f$ is affine equivalent to $G_1(x)=x^3$
    or 
    $f$ is affine equivalent to $G_2(x)=x^{2^{m-1}+1}$.
\end{proposition}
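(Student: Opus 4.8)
The plan is to reduce the claim to Theorem~\ref{thm-small-field}. Concretely, I would exhibit a linearized bijection of $\F_{q^2}$ that carries $f$ to an affine equivalent Kim-type function whose three coefficients all lie in the subfield $\F_q$. Since affine equivalence preserves the APN property, that transformed function is an APN Kim-type function defined over $\F_q$, and Theorem~\ref{thm-small-field} then immediately gives that it, and hence $f$, is affine equivalent to $G_1$ or $G_2$.

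First I would record precisely how the given coefficients fail to be defined over $\F_q$. Using $u^q=1/u$ and $z^q=1/z$ one checks that $a_1^q=a_1$, while $a_2^q=a_2/u^2$ and $a_3^q=a_3/z^2$; thus $a_1\in\F_q$, and the only obstructions to $f$ being defined over $\F_q$ are the ``twists'' $u$ on $a_2$ and $z$ on $a_3$. The same computation yields $a_2a_2^q=a_1^2$ and $a_3a_3^q=a_3^{q+1}\in\F_q$, whence $\t_1=1+a_3^{q+1}\in\F_q$ and $\t_2=u\,\t_1$, so that $\t_2^{q+1}/\t_1^2=\t_1^{q-1}=1$, recovering the value already used in Proposition~\ref{prop-a1-a2-in-U}. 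These identities tell me exactly which twists a reducing map must cancel, and they also pin down the degenerate behaviour: $a_2=0$ or $a_1=0$ forces $u^3=z$ and hence $\t_1=0$, which is excluded since $f$ is APN, and $u=1$ already gives $a_1=a_2=1$, $a_3=0$, all in $\F_q$.

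The reducing map I would use is a composition of an input scaling $x\mapsto cx$ with $c\in U$ and an output map $L(w)=\alpha w^q+\beta w$. Both preserve the Kim-type shape, the scaling by an element of $U$ is always a bijection (and scaling by $\F_q^*$ does not affect $\F_q$-membership of the coefficient ratios, so $c\in U$ is no loss), and $L$ is a bijection precisely when $\alpha/\beta\notin U$. Writing $s=\alpha/\beta$, the transformed coefficients (after normalizing the $x^{3q}$-coefficient to $1$) are explicit rational functions of $c,s,u,z$; imposing that each of them equals its own $q$-th power gives a small system that determines $c$ and $s$ in terms of $u,z$. I would then verify that the resulting $L$ is a bijection, i.e.\ that $s\notin U$, and that the three normalized coefficients genuinely lie in $\F_q$, after which Theorem~\ref{thm-small-field} finishes the argument.

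The main obstacle is the apparent over-determination of this reducing system: requiring all three normalized coefficients to lie in $\F_q$ is three conditions over $\F_q$, whereas $c\in U$ and $s\in\F_{q^2}$ supply only three real parameters, so a solution exists only if the conditions are compatible. Their compatibility is not formal; it relies on the very special shape that the $\Gamma_2$ analysis of Proposition~\ref{prop-a1-a2-in-U} forced on $(a_1,a_2,a_3)$, as reflected in the identities $\t_1\in\F_q$ and $\t_2=u\,\t_1$ above. Establishing compatibility, producing the explicit $c,s$, and confirming $s\notin U$ amounts to a resultant and factorization computation in $u,z$ of exactly the kind already carried out in Propositions~\ref{prop-Gamma1} and~\ref{prop-a1-a2-in-U}, supported by Magma. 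I expect this verification, rather than any conceptual step, to be where the real work lies.
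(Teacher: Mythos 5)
Your overall strategy --- find a linearized bijection carrying $f$ to a Kim-type function with all three coefficients in $\F_q$ and then invoke Theorem~\ref{thm-small-field} --- is genuinely different from the paper's. The paper instead conjugates $G_2$ directly: it exhibits explicit maps $L_1(x)=x^q+tx$ and $L_2(x)=x^{2q}+sx^2$, with $t$ and $s$ given as explicit rational functions of $u$, $z$ and a primitive cube root of unity $\w\in\F_q$, such that $L_1(G_2(L_2(x)))=c_0f(x)$; it then verifies that $L_1$ and $L_2$ are bijections and disposes of the degenerate cases $u^3=z$ (where $a_1=a_2=0$ and Proposition~\ref{prop-a1-or-a3-is-zero} gives equivalence to $G_1$) and $u=z$ (where $\t_1=0$ and $f$ is not APN). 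Your preliminary identities are correct: indeed $a_1\in\F_q$, $a_2^q=a_2/u^2$, $a_3^q=a_3/z^2$, and one computes $\t_1=(u+z)^2(u^3+z)^2/(u^2+z)^4\in\F_q$ and $\t_2=u\t_1$, so $\t_2^{q+1}/\t_1^2=1$. Your reduction target is also consistent with the paper's conclusion, since the proof of Theorem~\ref{thm-small-field} shows that $G_2$ is itself affine equivalent to Kim-type functions over $\F_q$.

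Nevertheless, as written the proposal has a genuine gap: the entire substance of the proposition is the existence of the reducing transformation, and you do not establish it. You acknowledge that requiring all three normalized coefficients to lie in $\F_q$ is an over-determined system in $(c,s)$ whose compatibility ``is not formal,'' and you defer to an unspecified computation the solution of that system, the proof that a solution exists for every admissible pair $(u,z)$, and the verification that the resulting output map is a bijection ($s\notin U$). Until that computation is carried out --- including identifying and handling the degenerate pairs $(u,z)$ for which the construction breaks down, which are exactly where the fallback to $G_1$ or to non-APN-ness must occur --- the argument is a plan rather than a proof. There is an additional risk in your choice of transformation family: you restrict the inner map to a scaling $x\mapsto cx$ with $c\in U$, which contributes only one $\F_q$-parameter, whereas a general inner map $x\mapsto px^q+rx$ would preserve the Kim-type shape and give more freedom; if the three-parameter system $(c,s)$ turns out to be incompatible for some $(u,z)$, the fix would require enlarging the family, and nothing in the proposal rules this out.
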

\begin{proof}
	Since $m$ is even, $\F_q$ contains a primitive cube root of unity,
	which we will denote $\w$, that is, $\w^2+\w+1=0$.

Let $D=(\w^2u^3+(u+\w) z)^2$.
Let $L_1(x)=x^q+tx$ where
\begin{equation}
\label{eq-def-t}
t=\frac{(\w u^3 + (u+\w^2)z)^2}{uD}
\end{equation}
and let $L_2(x)=x^{2q}+sx^2$ where $s=\w u^2$.

Let us consider the cases when the denominator of (\ref{eq-def-t}) may
vanish. This happens when $z_1\in U$ where
\[
z_1=\frac{w^2 u^3}{u+\w }.
\]
We have
\[
\frac{1}{z_1}+z_1^q=\frac{\w(u+1)^2}{u^3(\w+u+1)}
\]
hence $z_1\in U$ only if $u=1$, but then $z_1=1$ and $u^2=z_1$, hence
this case can not occur.

Let $G_2(x)=x^{2^{m-1}+1}$ be a Gold APN function.
Denote
\begin{equation}
\label{eq-def-c0123}
F(x)=L_1(G_2(L_2(x))) = c_0x^{3q}+c_1x^{2q+1}+c_2x^{q+2}+c_3x^3.
\end{equation}
A calculation shows that
\begin{eqnarray*}
c_0 & = & \frac{(u^2+z)^2}{D}
\\
c_1 & = & \frac{(u^3+z)^2}{uD}
\\
c_2 & = & \frac{(u^3+z)^2}{D}
\\
c_3 & = & \frac{u z^2 (u+1)^2}{D}.
\end{eqnarray*}
Since $u^2\neq z$, we have $c_0\neq 0$.
Since $c_i/c_0=a_i$ for all $i=1,2,3$, we get that $F(x)=c_0f(x)$.

We need to show that $L_1$ and $L_2$ are permutations of $\F_{q^2}$.
For $L_1$ this is the case if and only if $t\not \in U$.
We can assume without loss of generality that $t\neq 0$ and compute
\[
t^q+\frac{1}{t}=\frac{u^3(u^3+z)^2(u+z)^2}
            {\w(u^3+\w^2uz+\w z)^2  (u^3+\w^2 u^2+\w z)^2 }.
\]
Hence $t^{q+1}\neq 1$ unless $u^3=z$ or $u=z$.
If $u^3=z$ then $a_1=a_2=0$ and $f$ is affine equivalent
to $G_1(x)=x^3$ by Proposition~\ref{prop-a1-or-a3-is-zero}.
If $u=z$ then $\t_1=0$ and $f$ is not APN.

To show that $L_2$ is a permutation of $\F_{q^2}$ we need to
show that the only root of $L_2(x)=x^2(x^{2q-2}+s)$ 
in $\F_{q^2}$
is $0$, equivalently,
that there does not exist $x\in\F_{q^2}$ such that $x^{2q-2}=s$.
After raising this equation to power $q+1$ the left-hand side
becomes $(x^{2q-2})^{q+1}=(x^{q^2-1})^2=1$ but the right-hand side
becomes $s^{q+1}=(\w u^2)^{q+1}=\w^{q+1}=\w^2$ since $m$ is even.
Hence the equation can not hold, and $L_2$ is a permutation of $\F_{q^2}$. 
\end{proof}

\begin{proposition}
	\label{prop-a1-a2-eq-a3q}
	Let $f:\F_{q^2}\ra\F_{q^2}$ be 
	given by $f(x)= x^{3q}+a_1x^{2q+1}+a_2x^{q+2}+a_3x^3$ where $a_1\in\F_{q}$, $a_2\in\F_{q^2}^*$ and  $a_3\in\F_{q^2}$, and assume that $a_1/a_2=a_3^q$.
	If $f$ is APN, then $f$ is affine equivalent to $G_1(x)=x^3$.
\end{proposition}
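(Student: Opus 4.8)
The plan is to reuse the device of Proposition~\ref{prop-make-a2-zero}, namely to apply the linear map $L(z)=z^q+tz$ with $t=a_1/a_2$, but to exploit the fact that the hypothesis $a_1/a_2=a_3^q$ now makes the leading coefficient vanish as well. Writing $t=a_1/a_2=\bar a_3$, the map $L$ is a permutation exactly when $t\notin U$, that is, when $a_3\notin U$, and under the same condition the coefficient $d=1+a_3^{q+1}$ of $x^3$ in $L(f(x))$ is nonzero. So the first step is to split the argument according to whether $a_3\in U$ or $a_3\notin U$.

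When $a_3\notin U$, the map $L$ is a permutation, so $g:=L\circ f$ is affine equivalent to $f$. A direct computation in $g(x)=f(x)^q+tf(x)$, using $a_1\in\F_q$ and $a_1=a_2\bar a_3$, shows that both the $x^{3q}$ and the $x^{q+2}$ coefficients cancel, leaving the binomial $g(x)=c\,x^{2q+1}+d\,x^3$ with $c=\bar a_2+a_2\bar a_3^2=\theta_3$ and $d=1+a_3^{q+1}\ne0$. I would then apply the Frobenius automorphism $z\mapsto z^q$, which is affine and sends $x^{2q+1}$ to $x^{q+2}$ and $x^3$ to $x^{3q}$; after dividing by $d^q\ne0$ this yields the Kim-type function $h(x)=x^{3q}+e\,x^{q+2}$ with $e=(c/d)^q$, still affine equivalent to $f$. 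For $h$ the constants of~(\ref{theta}) are $\theta_2=0$, $\theta_1=1+e^{q+1}$, $\theta_3=\bar e$ and $\theta_4=e^{q+1}$, so the trace condition of Theorem~\ref{thm-LLHQ} is automatic, while the cubic conditions defining $\Gamma_1$ and $\Gamma_2$ collapse to $\theta_1^2\theta_4=0$ and $\theta_1^2\theta_3=0$ respectively. Since an APN $h$ forces $\theta_1\ne0$, in either case we obtain $e=0$, hence $c=0$ and $g(x)=d\,x^3$, which is affine equivalent to $G_1(x)=x^3$.

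When $a_3\in U$ the map $L$ degenerates, and here I would instead show that no APN function arises. Since $a_3\in U$ gives $a_1=a_2\bar a_3\ne0$ and $a_2=a_1a_3$, a short calculation shows $a_3\,f(x)^q=f(x)$ for every $x$, i.e. $f(x)^q=\bar a_3\,f(x)$. Thus every nonzero value of $f$ satisfies $f(x)^{q-1}=\bar a_3$, so $\mathrm{Im}(f)$ lies in one multiplicative coset of $\F_q^*$ together with $0$, that is, in an $\F_2$-subspace $V$ of $\F_{q^2}$ of dimension $m$. Consequently every derivative $x\mapsto f(x+a)+f(x)$ takes values in $V$ and has image of size at most $2^m$, whereas an APN function has all derivatives two-to-one and hence of image size $2^{2m-1}$; as $m\ge4$ this is impossible, so $f$ is not APN and this case is vacuous.

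The only genuinely computational parts are the cancellation producing $g(x)=c\,x^{2q+1}+d\,x^3$ and the evaluation of $c$ and $d$, which are mechanical. I expect the main subtlety to be the degenerate case $a_3\in U$ (equivalently $a_1/a_2\in U$), where the reduction to a binomial fails; the image-in-a-subspace argument disposes of it cleanly and keeps the proof self-contained, without appealing to Proposition~\ref{prop-a1-a2-in-U}.
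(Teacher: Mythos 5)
Your proof is correct, but it takes a genuinely different route from the paper's. The paper writes $a_1=vy$, $a_2=vz$, $a_3=yz$ with $v,y\in\F_q$ and $z\in U$ (the unique $\F_q^*\times U$ decomposition forces the same $z$ in $a_2$ and $a_3$ because $a_1\in\F_q$), computes $\t_1=(v+1)^2(y+1)^2$, and substitutes directly into (\ref{Eq-4}) and (\ref{Eq-5}); both reduce to multiples of $v(v+1)^4(y+1)^6$, so $\t_1\neq 0$ forces $v=0$, hence $a_1=a_2=0$, and Proposition~\ref{prop-a1-or-a3-is-zero} finishes. That is one uniform computation with no case split and no use of the trace condition. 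You instead exploit the structural meaning of the hypothesis: $t=a_1/a_2=\bar a_3$ is exactly the value for which $L(z)=z^q+tz$ kills the $x^{3q}$ and $x^{q+2}$ coefficients simultaneously, reducing $f$ to the binomial $\t_3x^{2q+1}+(1+a_3^{q+1})x^3$, after which Theorem~\ref{thm-LLHQ} applied to the resulting $x^{3q}+ex^{q+2}$ collapses to $\t_1^2\t_4=0$ or $\t_1^2\t_3=0$ and forces $e=0$. The price is the degenerate case $a_3\in U$, which you dispose of with a clean soft argument: the identity $a_3f(x)^q=f(x)$ does hold (it needs $\bar a_2a_3=a_2\bar a_3$, which follows from $a_2\bar a_3=a_1\in\F_q$), so the image of $f$, and hence of every derivative, lies in the $m$-dimensional $\F_2$-subspace $\{w\,:\,w^q=\bar a_3w\}$ of size $2^m$, while every derivative of an APN function is two-to-one with image of size $2^{2m-1}$. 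Both proofs are complete; yours is longer but more conceptual, makes transparent why the hypothesis $a_1/a_2=a_3^q$ is the exceptional case left over from Proposition~\ref{prop-make-a2-zero}, and handles the degenerate subcase without any appeal to Theorem~\ref{thm-LLHQ}, whereas the paper's is a short mechanical verification.
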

\begin{proof}
	Assume $a_1=a_2a_3^q$. There exist $v,y\in\F_q$ and $z\in U$
	such that $a_1=vy$, $a_2=vz$ and $a_3=yz$. 
	After simplifications we get $\t_1=(v+1)^2(y+1)^2$.
	After plugging
	into (\ref{Eq-4}) and simplifications we get
	\[
	v^2(v+1)^4(y+1)^6=0.
	\]
	Since $\t_1\neq 0$, we must have $v=0$. Then $a_1=a_2=0$ and
	the conclusion follows from Proposition~\ref{prop-a1-or-a3-is-zero}.
	Finally after plugging  into (\ref{Eq-5}) and simplifications we get
	\[
	\frac{v(v+1)^4(y+1)^6}{z}=0
	\]
	and the conclusion is the same as in the previous case.
\end{proof}


\begin{thebibliography}{9}

\bibitem{mag}
W.~Bosma, J.~Cannon and C.~Playoust,
The Magma algebra system. I. The user language.
J.~Symbolic Comput.\  \textbf{24}  (1997),  235--265.

\bibitem{BDMW}
K.A.~Browning, J.F.~Dillon, M.T.~McQuistan, A.J.~Wolfe,
An APN permutation in dimension six. Finite fields: theory and applications, 33--42, Contemp. Math., 518, Amer. Math. Soc., Providence, RI, 2010.


\bibitem{BCV}
L.~Budaghyan, M.~Calderini, I.~Villa,
On equivalence between known families of quadratic APN functions.
Finite Fields and Their Applications
{\bf 66} (2020), 101704.

\bibitem{carlet2014open}
C.~Carlet, 
Open questions on nonlinearity and on APN functions.
Arithmetic of finite fields, 83--107,
Lecture Notes in Comput. Sci., 9061, Springer, Cham, 2015.

\bibitem{CCZ}
C.~Carlet, P.~Charpin, V.~Zinoviev,
Codes, bent functions and permutations suitable for DES-like cryptosystems. Des. Codes Cryptogr. {\bf 15} (1998), 125--156.

\bibitem{GKL}
F.~G\"{o}lo\u{g}lu, D.~Krasnayov\'a, P.~Lison\v{e}k,
Generalized Kim functions are not equivalent to permutations.
Preprint, 2020.

\bibitem{GL}
F.~G\"{o}lo\u{g}lu, P.~Langevin,
Almost perfect nonlinear families which are not equivalent to permutations.
Finite Fields Appl. {\bf 67} (2020), 101707.

\bibitem{K}
D.~Krasnayov\'{a},
Constructions of APN permutations.
Diploma Thesis.
Charles University, Prague, 2016.\\
https://dspace.cuni.cz/handle/20.500.11956/83075
(accessed on 12 September 2020)

\bibitem{LLHQ}
K.~Li, C.~Li, T.~Helleseth, L.~Qu,
A complete characterization of the APN property
of a class of quadrinomials.
	arXiv:2007.03996 [cs.IT]
	(accessed on 19 August 2020)

\bibitem{Lietal}
S.~Li, W.~Meidl, A.~Polujan, A.~Pott, C.~Riera, P.~St\u{a}nic\u{a},
Vanishing flats: a combinatorial viewpoint on the planarity of functions and their application.
To appear in IEEE Transactions on Information Theory.
	arXiv:2006.01941  [cs.IT] (accessed on 10 September 2020)



\end{thebibliography}
\end{document}